\numberwithin{equation}{section}
\theoremstyle{plain}
\newtheorem{theorem}{Theorem}
\newtheorem{lemma}{Lemma}
\newtheorem{propos}{Proposition}
\theoremstyle{definition}
\newtheorem{definition}{Definition}
\newtheorem{remark}{Remark}
\newcommand{\Z}{\mathbb{Z}}
\newcolumntype{P}[1]{>{\centering\arraybackslash}p{#1}}
\newcolumntype{L}[1]{>{\raggedright\arraybackslash}p{#1}}
\title{\textbf{Variational formulas on spaces of SL(2) Hitchin's spectral covers}}
\author{\text{Roman Klimov }\footnote{\textit{Email address}: {roman.klimov@concordia.ca} }\\ \\\textit{Department of Mathematics and Statistics, Concordia University}\\ \textit{1455 de Maisonneuve W., Montreal, Quebec, Canada H3G 1M8}}
\date{}
\begin{document}
\maketitle

\begin{abstract}
Using the developed deformation theory on moduli spaces of quadratic differentials we derive variational formulas for objects associated with generalized $SL(2)$ Hitchin’s spectral covers: Prym matrix, Prym bidifferential, Hodge and Prym tau-functions. The resulting formulas are antisymmetric versions of Donagi-Markman residue formula. The second variation of Prym matrix is a natural analogy to the formula previously derived for the period matrix of the $GL(n)$ spectral cover.
\end{abstract}

\tableofcontents{}

\newpage
\section{Introduction}

Hitchin's systems were originally introduced in \cite{hitchin} as a dimensional reduction of the self-dual Yang-Mills equation. These systems together with their meromorphic generalizations \cite{Hurtubise_2000} provide the widest class of integrable systems associated to a Riemann surface. Hamiltonians of such systems are given by meromorphic $l$-differentials arising in the definition of a spectral cover. The moduli spaces of these spectral covers play a particular interest for us in this paper, where we focus on the generalized $SL(2)$ case, naturally identified with the moduli space of meromorphic quadratic differentials on a Riemann surface $\mathcal{C}$. For a given Riemann surface, $Q$  is a family of meromorphic quadratic differentials with generically simple zeroes and poles of arbitrary even orders. Then equation 
$v^2=Q$ in the cotangent bundle of $\mathcal{C}$ defines a two-fold spectral cover $\pi:\hat{\mathcal{C}} \xrightarrow{} \mathcal{C},$ branched at zeroes of $Q$.

Related moduli spaces with a variable base curve were extensively studied. We refer to \cite{Bertola_2017} for their symplectic properties in holomorphic case. Using a natural embedding into the space of Abelian differentials authors derived variational formulas for various
canonical objects associated to the double cover \cite{korotkin2013tau}.  The theory of Bergman tau functions provides an analytical tool to study the rational Picard group of the moduli space of quadratic differentials \cite{korotkin2013tau}, \cite{korotkin2020bergman}.

In this article, following \cite{bertola2019spaces}, we consider variations of objects linked to the covering surface, given a fixed base curve. While in \cite{bertola2019spaces} authors study generalized $GL(n)$ spectral covers, here we focus on the low dimensional case corresponding to the $SL(2)$ gauge group, where all the ramification points have simple branching and the covering surface $\hat{\mathcal{C}}$ possesses an involution automorphism.

The natural involution map $\mu:\hat{\mathcal{C}}\xrightarrow{}\hat{\mathcal{C}}$ on the canonical double cover induces the splitting of its homology group into even and odd subgroups. Similarly, holomorphic differentials defined on the covering surface could be represented as the sum of two differentials symmetric and skew-symmetric under the involution. While the symmetric element is a pullback from the base curve, the skew-symmetric differential is associated exclusively with the covering surface and is called Prym differential. We denote a basis of \textit{normalized Prym differentials} by $u^-_{\alpha}$. The similar decomposition also applies to the canonical bidifferential \cite{korotkin2013tau}. It turns out that only skew-symmetric differentials contribute to the variations under the assumption that the base curve is kept fixed.

Denote by $\{a^-_\gamma, b^-_\gamma \} \in H_{-}(\hat{\mathcal{C}})$ generators of the odd part of homology group of cycles on $\hat{\mathcal{C}}$.  The derivatives of the \textit{Prym matrix} $\Pi_{\alpha \beta}=\oint_{b^-_\beta}u^-_\alpha$ with respect to local coordinates on the space of $SL(2)$ spectral covers reproduce the formula analogous to the Donagi-Markman cubic \cite{Donagi_1996}: denote by $A_\alpha=\oint_{a^{-}_{\alpha}} v$ integrals over $a^-$- cycles, the remaining coordinates defined in (\ref{9}) span a meromorphic part of $v.$

\begin{theorem}
The variation of the Prym matrix $\Pi$ of the covering surface take the following form:

\begin{equation}\label{a1}\frac{\partial \Pi_{\alpha \beta}}{\partial A_{\gamma}}=-\pi i \sum_{x_i}\underset{x_i}{res}\Bigg(\frac{ {u}^-_{\alpha}  {u}^-_{\beta}  {u}^-_{\gamma}}{d{\xi} \: d( v/d{\xi})}\Bigg),\end{equation}

where $\xi$ denotes a local coordinate near a branch point $x_i$ on the base curve $\mathcal{C}$.
\end{theorem}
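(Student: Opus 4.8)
The plan is to reduce the statement to the variation of a single normalized Prym differential and then to localize that variation at the ramification points by a Rauch-type argument, the residues assembling into the stated symmetric cubic. First I would differentiate under the period integral,
\[
\frac{\partial \Pi_{\alpha\beta}}{\partial A_\gamma}=\oint_{b^-_\beta}\frac{\partial u^-_\alpha}{\partial A_\gamma},
\]
where the derivative is taken at constant distinguished local coordinate $z=\int v$ on $\hat{\mathcal{C}}$; this is the natural flat connection attached to the homological coordinates, under which the charts cut out by $v$ are frozen while their gluing, i.e. the periods of $v$, is deformed. Because the base curve $\mathcal{C}$ is held fixed, the deformation lives entirely in the odd sector, so $\partial u^-_\alpha/\partial A_\gamma$ is again a skew-symmetric (Prym) differential; moreover it has vanishing $a^-$-periods, since $\oint_{a^-_\delta}u^-_\alpha=\delta_{\alpha\delta}$ is constant along the deformation.

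Next I would establish, via the deformation theory underlying this article, that $\partial u^-_\alpha/\partial A_\gamma$ is a meromorphic Prym differential whose only singularities sit at the ramification points $x_i$, with principal parts dictated by the local model $v^2=Q$. Near $x_i$ the quantity $p=v/d\xi$ is a genuine local coordinate on the cover (it is the odd square root of $Q/(d\xi)^2$), while $v$ itself has a double zero there; differentiating the normalization at constant $z$ therefore produces a double pole at $x_i$ whose coefficient is governed by $u^-_\gamma(x_i)$, the $\gamma$-dependence of the formula entering precisely here through the velocity of the ramification data in the $A_\gamma$-direction. Extracting this principal part together with its numerical coefficient, and checking that separating symmetric from skew-symmetric parts kills every other contribution, is the main obstacle: it is the genuinely local, Rauch-type computation at a simple branch point, and it is where the denominator $d\xi\,d(v/d\xi)$ is produced.

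Finally I would integrate over $b^-_\beta$ using the Riemann bilinear relations on the odd homology $H_-(\hat{\mathcal{C}})$. Since $\partial u^-_\alpha/\partial A_\gamma$ carries no $a^-$-periods, the bilinear relation collapses $\oint_{b^-_\beta}\partial u^-_\alpha/\partial A_\gamma$ to a sum of residues at the $x_i$ of $u^-_\beta$ paired against the principal part found above, assembling the product $u^-_\alpha u^-_\beta u^-_\gamma$ in the numerator over $d\xi\,d(v/d\xi)$. The passage through the Prym period lattice of a double cover replaces the usual $2\pi i$ by $\pi i$ and fixes the sign, yielding the constant $-\pi i$. As a consistency check I would observe that the resulting expression is manifestly symmetric in $\alpha,\beta,\gamma$, which recovers the total symmetry of the Donagi--Markman cubic and confirms that the answer does not depend on which index was singled out as the deformation direction.
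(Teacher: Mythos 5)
Your strategy---differentiate under the period integral, show that $\partial u^-_\alpha/\partial A_\gamma$ is a Prym differential with vanishing $a^-$-periods and poles only at the branch points, then collapse $\oint_{b^-_\beta}$ by the Riemann bilinear identity---is a viable route, but it is not the paper's. The paper never needs the pointwise variation of $u^-_\alpha$ to prove this theorem: it works on the ambient space $\mathcal{Q}_{g,m}[\mathbf{k}]$, writes the chain rule $\frac{d\Pi_{\alpha\beta}}{dA_\gamma}=\frac{\partial\Pi_{\alpha\beta}}{\partial A_\gamma}\big|_{\{B_i\}}+\sum_i\frac{\partial\Pi_{\alpha\beta}}{\partial B_i}\frac{\partial B_i}{\partial A_\gamma}$, substitutes the already-established homological variational formula $\frac{\partial\Pi_{\alpha\beta}}{\partial\mathcal{P}_s}=\frac12\oint_{s^*}\frac{u^-_\alpha u^-_\beta}{v}$ together with $\partial v/\partial A_\gamma=u^-_\gamma$, and recognizes the resulting bilinear combination of periods as an RBI sum, $-\pi i\sum_i\underset{x_i}{res}\big(\frac{u^-_\alpha u^-_\beta}{v}\int^x_{x_i}u^-_\gamma\big)$; the only local input is then the elementary expansion of $\frac{u^-_\alpha u^-_\beta}{v}$ at a double zero of $v$. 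Your route instead requires the full Rauch-type principal part of $\partial u^-_\alpha/\partial A_\gamma$ at each $x_i$, which in the paper is the content of the later Theorem 7 and is itself derived \emph{from} the $B^-$-variation, i.e.\ downstream of the present theorem.

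The genuine gap is that you name the decisive step---``extracting this principal part together with its numerical coefficient''---as the main obstacle but do not perform it, and everything specific to the claimed identity (the denominator $d\xi\,d(v/d\xi)$, the fact that the pole is double rather than of some other order, the constant $-\pi i$) is produced precisely there. As written, your argument establishes only the qualitative shape ``a sum of local contributions at the $x_i$, trilinear in $u^-_\alpha,u^-_\beta,u^-_\gamma$,'' not the formula. Two secondary points would also need care. First, on $\mathcal{M}_{SL(2)}[\mathbf{k}]$ the paper differentiates at constant base coordinate $\xi$ (its equation for $\partial w/\partial p_i$), not at constant flat coordinate $z=\int^x v$, which itself moves with the moduli; if you insist on constant $z$ you must check that the exact correction $d\big(\tfrac{u^-_\alpha}{v}\int^x u^-_\gamma\big)$ contributes nothing to the $b^-_\beta$-period, which is not automatic because $\int^x u^-_\gamma$ is multivalued around the cycle. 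Second, the passage from the bilinear identity to residues involves the intersection normalization $a^-_i\circ b^-_j=\tfrac12\delta_{ij}$ and the pairing of residues at involution-conjugate points; asserting that this ``replaces $2\pi i$ by $\pi i$ and fixes the sign'' again states the answer rather than deriving it.
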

The variations with respect to the moduli representing the singular part of $v$ are obtained by similar formulas with Prym holomorphic differentials $u^-_{\gamma}$ replaced by  \textit{Prym second-kind (\ref{70}) and third-kind (\ref{71}) differentials} (\mbox{\textbf{Theorem 5}}). The derivation of this result differs from the one proposed in \cite{bertola2019spaces} for $GL(n)$ group where $v$ can be generically viewed as a meromorphic Abelian differential with simple zeroes at $x_i,$ since in $SL(2)$ case differential $v$ gains double zeroes at the branch points.
While in \cite{bertola2019spaces} the contributions to the result bring variations of the form $\underset{x_i}{res}\Big(\frac{u_\alpha u_\beta}{v}\Big)\frac{\partial}{\partial A_\gamma}\Big(\int^{x_i}_{x_1}v \Big)$, in the present context similar contributions arise via residues near the double poles at zeroes of $v.$

Introduce the canonical (Bergman)  bidifferential $\hat{B}(x, y)$ on $\hat{\mathcal{C}} \times \hat{\mathcal{C}}$ and define the \textit{Prym bidifferential} ${B}^-(x, y):=\hat{B}(x, y)-\mu^*_y\hat{B}(x, y),$ where notation $\mu^*_y$
 means that we take the pullback with respect to the involution on the second factor in $\hat{\mathcal{C}} \times \hat{\mathcal{C}}$. The variations of ${B}^-(x, y)$  are given by

\begin{theorem}

\begin{equation}\label{b1}\frac{\partial B^-(x,y)}{\partial A_{\gamma}}=-\frac{1}{2} \sum_{x_i}\underset{x_i}{res}\Bigg(\frac{{u}^-_{\gamma}(t) B^-(x,t)  B^-(t,y)  }{d{\xi} \: d( v/d{\xi})}\Bigg)\end{equation} 

\end{theorem}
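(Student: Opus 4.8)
The plan is to characterize $\partial B^-(x,y)/\partial A_\gamma$ by its analytic properties on $\hat{\mathcal{C}}$ and to verify that the right-hand side of (\ref{b1}) shares them, so that the two must coincide. Fix the points $x,y$ together with their local coordinates $\xi$ on the base $\mathcal{C}$, and regard $F(x,y):=\partial B^-(x,y)/\partial A_\gamma$ as a bidifferential in $x$ for generic fixed $y$. Since the cycles $a^-_\alpha$ and the normalization $\oint_{a^-_\alpha}B^-(\cdot,y)=0$ are held fixed along the deformation, differentiation commutes with the $a^-$-period integral and $F$ again has vanishing $a^-$-periods in $x$; it is moreover $\mu$-odd in $x$ and in $y$ by the very definition of $B^-$. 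The only place where $F$ can fail to be holomorphic is at the branch points $x_i$, where moving in moduli moves the ramification locus and forces $B^-$ to develop poles. The rigidity I would lean on throughout is that a holomorphic Prym differential all of whose $a^-$-periods vanish is identically zero, because the $a^-$-period map is an isomorphism on the space of Prym differentials. Hence it suffices to show that $F(x,y)$ minus the candidate $R(x,y)$ on the right of (\ref{b1}) is holomorphic in $x$, i.e. that their principal parts at every $x_i$ agree.

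First I would record the global structure of $R$. Because $B^-(\cdot,t)$ has vanishing $a^-$-periods, exchanging the $a^-_\alpha$-period integral with the residue at $t=x_i$ gives $\oint_{a^-_\alpha}R(\cdot,y)=0$; thus $R$ satisfies the same normalization as $F$, and $G:=F-R$ is a $\mu$-odd bidifferential with vanishing $a^-$-periods in $x$. It then remains to compare the two objects locally near each $x_i$, which is the analytic heart of the argument.

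The main obstacle is precisely this local computation. In the distinguished coordinate $\zeta=\sqrt{\xi-\xi(x_i)}$ the involution acts by $\zeta\mapsto-\zeta$, the differential $v$ acquires a double zero ($v\sim \mathrm{const}\cdot\zeta^2\,d\zeta$), so that $v/d\xi$ vanishes simply and $d\xi\,d(v/d\xi)$ is a quadratic differential with a simple zero at $\zeta=0$, while near coincidence $B^-(x,t)\sim d\zeta_x\,d\zeta_t\bigl(\tfrac{1}{(\zeta_x-\zeta_t)^2}+\tfrac{1}{(\zeta_x+\zeta_t)^2}\bigr)$ and $u^-_\gamma$ is even in $\zeta_t$. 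Carrying out the residue at $t=x_i$ I would extract the principal part of $R(x,y)$ as $x\to x_i$, and match it against the singular part of $\partial B^-/\partial A_\gamma$. This is the Rauch-type step: varying $A_\gamma=\oint_{a^-_\gamma}v$ deforms the branch-point configuration at a rate governed by the Prym differential $u^-_\gamma$ dual to $a^-_\gamma$, and one then localizes the Ahlfors/Kokotov--Korotkin variation of the Bergman kernel at $x_i$. This is where the factor $-\tfrac12$ is produced, and it is the step I expect to be most delicate, since it requires tracking how the branch point, the coordinate $\zeta$, and the double zero of $v$ respond to the deformation.

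As a consistency check that simultaneously fixes the constants, I would take $b^-$-periods of (\ref{b1}) in both arguments: using $\oint_{b^-_\alpha}B^-(\cdot,t)=2\pi i\,u^-_\alpha(t)$ the right-hand side collapses to $-\tfrac12(2\pi i)^2\sum_{x_i}\operatorname*{res}_{t=x_i}\frac{u^-_\alpha u^-_\beta u^-_\gamma}{d\xi\,d(v/d\xi)}$, whereas the left-hand side becomes $2\pi i\,\partial\Pi_{\alpha\beta}/\partial A_\gamma$; these coincide with the statement of Theorem 1, confirming the normalization. The intermediate identity obtained after a single $b^-$-period, namely $\partial u^-_\alpha(y)/\partial A_\gamma=-\tfrac12\sum_{x_i}\operatorname*{res}_{t=x_i}\frac{u^-_\gamma u^-_\alpha B^-(\cdot,y)}{d\xi\,d(v/d\xi)}$, can be isolated as a lemma and reused. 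Once the principal parts are shown to match, $G\equiv 0$ by the rigidity above, first in $x$ and then, by the symmetry of both sides, in $y$, completing the proof.
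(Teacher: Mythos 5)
Your strategy is a legitimate one, but it is genuinely different from the paper's, and as written it contains a real gap. The paper does not characterize $\partial B^-/\partial A_\gamma$ by its singularities at all: it writes $B^-(x,y)=b^-(x,y)v(x)v(y)$ with $b^-=B^-/(v\otimes v)$, converts between derivatives at fixed flat coordinate $z$ and fixed base coordinate $\xi$, and then computes $\partial b^-/\partial A_\gamma|_{z(x),z(y)}$ by restricting the already-known ambient variational formula (\ref{42}) on $\mathcal{Q}_{g,m}[\mathbf{k}]$ to the subspace via the chain rule and the Riemann bilinear identity, as in (\ref{118}). The residues at $t=x,x^\mu,y,y^\mu$ of $b^-(x,t)b^-(t,y)v(t)$ then cancel exactly the correction terms produced by the change of which coordinate is held fixed, leaving only the residues at the branch points. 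The point of that route is that no new local analysis of the variation is needed: all the analytic input is already contained in Proposition 1.

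The gap in your version is that the decisive step --- computing the principal part of $\partial B^-(x,y)/\partial A_\gamma$ at each branch point $x_i$ in the fixed coordinate $\xi$, and showing it agrees with that of the right-hand side --- is precisely the step you defer (``the step I expect to be most delicate''). But this is where the entire content of the formula lives: since $B^-/d\xi(x)$ behaves like $(\xi-\xi_i)^{-1/2}$ with $\xi_i$ moduli-dependent, differentiating at fixed $\xi$ produces a second-order pole in the coordinate $\hat\xi=\sqrt{\xi-\xi_i}$, and extracting its coefficient requires knowing the rate $\partial\xi_i/\partial A_\gamma$, i.e.\ exactly the Rauch-type input that the paper instead imports wholesale through (\ref{42}) and (\ref{82}). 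Your period consistency checks cannot close this gap: the rigidity argument (a $\mu$-odd holomorphic differential with vanishing $a^-$-periods vanishes) only applies once holomorphy of $F-R$ is established, and the $b^-$-period comparison with Theorem 1 is then redundant rather than probative --- if $F-R$ is not yet known to be holomorphic, agreement of periods proves nothing, and if it is, the $a^-$-normalization alone already finishes the proof. So either carry out the local computation at $x_i$ in full (tracking the motion of the branch point and the double zero of $v$ under $\partial/\partial A_\gamma$), or switch to the paper's reduction from $\mathcal{Q}_{g,m}[\mathbf{k}]$, where that computation is not needed.
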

and resemble the formulas derived in \cite{Baraglia_2019} and \cite{bertola2019spaces} for $GL(n)$ gauge group.  

While in \cite{bertola2019spaces} variations of the Bergman-tau function were computed in the holomorphic case, in this paper we extend this result. We consider two related Berman tau-functions $\tau^+$ and $\tau^-$, called \textit{Hodge and Prym}, naturally arising in the context of the double cover \cite{Bertola_2020} and derive their variations on the whole space including the derivatives with respect to the moduli encoding the singularities (\textbf{Theorems 8 and 9}). Denote by $\{\tilde{z}_k\}^n_{k=1}$ the set of double poles of $Q,$ then $\pi^{-1}(\tilde{z}_k)=\{\tilde{z}^{(1)}_k, \tilde{z}^{(2)}_k \}$ are corresponding simple zeroes of $v$ with residues denoted by $\tilde{r}_k$ and $-\tilde{r}_k$ respectively.  Derivatives with respect to the periods take the following form

\begin{theorem}

\begin{equation}\label{t+} \frac{\partial \log(\tau^+)}{\partial A_\gamma}= \frac{5}{432}\sum_{x_i}\underset{x_i}{res}\Bigg(\frac{u^-_{\gamma}}{\int^x_{x_i}v} \Bigg)-\sum^n_{k=1}\frac{1}{48\tilde{r}_k}\int^{\tilde{z}^{(1)}_k}_{\tilde{z}^{(2)}_k}u^-_{\gamma},\end{equation}

\begin{equation}\label{t-} \frac{\partial \log(\tau^-)}{\partial A_\gamma}= \frac{1}{2} \sum_{x_i}\underset{x_i}{res}\Bigg(\frac{{u}^-_{\gamma} \hat{B}_{reg}  }{d{\xi} \: d( v/d{\xi})}\Bigg)+\frac{11}{432}\sum_{x_i}\underset{x_i}{res}\Bigg(\frac{u^-_{\gamma}}{\int^x_{x_i}v} \Bigg)-\sum^n_{k=1}\frac{1}{48\tilde{r}_k}\int^{\tilde{z}^{(1)}_k}_{\tilde{z}^{(2)}_k}u^-_{\gamma}.\end{equation} 

\end{theorem}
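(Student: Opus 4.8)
The plan is to obtain both identities from the defining characterization of the Hodge and Prym tau-functions recalled in the earlier sections, namely that $\partial_{A_\gamma}\log\tau^{\pm}$ is a sum of residues, over the branch points, of the Bergman projective connection of $\hat{\mathcal C}$ paired with the deformation datum of the $A_\gamma$-flow. The $A_\gamma$-flow is generated by the normalized Prym differential $u^-_\gamma$ --- this is the mechanism already behind~(\ref{a1}) and~(\ref{b1}) --- so $\partial_{A_\gamma}\log\tau^{\pm}$ is computed by the same branch-point residue calculus. First I would record, together with the Hodge/Prym decomposition $\hat B=\tfrac12(\hat B+\mu_y^*\hat B)+\tfrac12 B^-$, the structural fact that the even part $\hat B+\mu_y^*\hat B$ is a pullback from the fixed base $\mathcal C$ and is therefore annihilated by $\partial/\partial A_\gamma$ in the base trivialization; this is exactly why the $\hat B_{reg}$ term is present in~(\ref{t-}) and absent from~(\ref{t+}).

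The computation then parallels~(\ref{b1}): the regularized coincidence value of the odd Bergman bidifferential, $\hat B_{reg}(t)=\tfrac16 S_{\hat B}(t)\,dt^2$ obtained from $\hat B(x,y)=\big(\tfrac1{(x-y)^2}+\tfrac16 S_{\hat B}(x)+O(x-y)\big)dx\,dy$, supplies the leading contribution $\tfrac12\sum_{x_i}\mathrm{res}_{x_i}\big(u^-_\gamma\hat B_{reg}/(d\xi\,d(v/d\xi))\big)$ to~(\ref{t-}), mirroring~(\ref{b1}) with the product $B^-(x,t)B^-(t,y)$ replaced by its regularized diagonal value. The Schwarzian of the flat coordinate $w=\int v$, which must be subtracted from $S_{\hat B}$ to make the residue intrinsic, produces the additional contributions of the form $\mathrm{res}_{x_i}\big(u^-_\gamma/\int^x_{x_i}v\big)$ with constant $11/432$. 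For $\tau^+$ the same steps applied to the full cover, combined with the Hodge/Prym decomposition, remove the $\hat B_{reg}$ term (the even projection does not vary) and shift the branch-point constant to $5/432$, consistent with the full-cover constant $16/432$ splitting as $5/432+11/432$.

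Because the Prym differentials and the Bergman data are holomorphic, the only singularities of the residue integrands lie at the zeroes and poles of $v$, so the variation reduces to a finite sum of local residues. At a branch point $x_i$, where $v$ has a double zero, the flat coordinate $w=\int_{x_i}^x v$ ramifies as a cube root, $w\sim c\,\zeta^3$ with $\zeta=\sqrt\xi$; expanding $S_{\hat B}$ against the Schwarzian of $w$ to sufficient order in $\zeta$ produces the rational constants above, with the denominator $432=2^4\cdot27$ tracing to this cube-root ramification together with the $\tfrac1{12}$ normalization of the projective connection. At the simple poles $\tilde z_k^{(1)},\tilde z_k^{(2)}$ of $v$, where $w$ acquires a logarithm with opposite residues $\pm\tilde r_k$, the same analysis yields the finite contributions $-\sum_k\tfrac{1}{48\tilde r_k}\int^{\tilde z^{(1)}_k}_{\tilde z^{(2)}_k}u^-_\gamma$; the integral between the two preimages appears because $u^-_\gamma$ is odd under $\mu$ while the residues of $v$ are opposite on the two sheets, and this contribution is common to both tau-functions since the pole data of $v$ is shared by the even and odd projections.

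The main obstacle is the extraction of the exact constants at the double zeroes. One must expand the integrand as the arguments approach a branch point, simultaneously tracking the $(x-y)^{-2}$ subtraction implicit in $S_{\hat B}$ and the cube-root ramification of $w=\int v$, and then apportion the result between the even and odd projections so that the full-cover constant splits correctly as $5/432+11/432$. Maintaining this Hodge/Prym bookkeeping throughout the local computation, and checking that the even projection contributes nothing to the $\hat B_{reg}$ term, is the delicate step; the pole contributions and the reduction to a finite residue sum are comparatively routine once the characterization of $\tau^{\pm}$ is in place.
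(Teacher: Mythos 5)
Your overall strategy --- reduce the variation to local residue computations at the double zeroes and at the simple poles of $v$, expand the projective connections in the odd coordinate $\hat{\xi}=\sqrt{\xi}$, and track the cube-root ramification of $\int^x_{x_i}v$ --- is the right one and matches the paper in its second half. But there is a genuine gap at the start: you take as given a ``defining characterization'' of $\partial_{A_\gamma}\log\tau^{\pm}$ as a branch-point residue sum of the projective connection against the deformation datum. No such characterization is recalled in the paper; what is established (Theorem 4) is the system of \emph{contour-integral} equations on the larger space $\mathcal{Q}_{g,m}[\mathbf{k}]$, e.g.\ $\partial\log\tau^{\pm}/\partial A_j=\frac{1}{4\pi i}\oint_{b^-_j}B^{\pm}_{reg}/v$ and its companion in $B_j$. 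The actual derivation must (i) apply the chain rule along the embedding $\mathcal{M}_{SL(2)}[\mathbf{k}]\subset\mathcal{Q}_{g,m}[\mathbf{k}]$, where the $B_i$ become functions of the $A_\gamma$ with $\partial B_i/\partial A_\gamma=\oint_{b^-_i}u^-_\gamma$ by (\ref{82}), and then (ii) use the Riemann Bilinear Identity to convert the resulting bilinear combination of $a^-$- and $b^-$-periods of $B^{\pm}_{reg}/v$ and $u^-_\gamma$ into $\frac14\sum_t \underset{t}{res}\big(\frac{B^{\pm}_{reg}}{v}\int^x_{x_1}u^-_\gamma\big)$, the sum running over the branch points \emph{and} the simple poles $\tilde z_k^{(1)},\tilde z_k^{(2)}$. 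Without this step you cannot account for the overall factor $\frac14$ (which is what turns $\frac{11}{108}$ into $\frac{11}{432}$, $\frac{5}{108}$ into $\frac{5}{432}$, and $\frac{1}{12\tilde r_k}$ into $\frac{1}{48\tilde r_k}$), nor for the presence of the primitive $\int^x_{x_1}u^-_\gamma$ inside the residue, which is what produces second derivatives of $u^-_\gamma$ at the fourth-order poles of $B^{\pm}_{reg}/v$ and the difference $\int^{\tilde z^{(1)}_k}_{\tilde z^{(2)}_k}u^-_\gamma$ at the simple poles.

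Your explanation of why the $\hat B_{reg}$ term survives only in the $\tau^-$ formula --- ``the even projection does not vary'' --- is also not the correct mechanism (and $\tau^+$ itself certainly varies). The real reason is local, at the fixed points of $\mu$: with $v=\hat\xi^2 d\hat\xi$ and $\hat\xi(x^\mu)=-\hat\xi(x)$ one has $\mu^*_y\hat B(x,y)|_{y\to x}=\big(-\frac{1}{4\hat\xi^2}-\frac16\hat S_B+O(\hat\xi^2)\big)(d\hat\xi)^2$, so in $S^{\pm}_B=\hat S_B\pm 6\mu^*_y\hat B|_{y\to x}$ the $\hat S_B$ contribution cancels for the $+$ sign and doubles for the $-$ sign. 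This yields $\frac{B^{+}_{reg}}{v}=\big(\frac{5}{12\hat\xi^4}+O(1)\big)d\hat\xi$ versus $\frac{B^{-}_{reg}}{v}=\big(\frac{11}{12\hat\xi^4}+\frac{\hat S_B}{3\hat\xi^2}+O(1)\big)d\hat\xi$, which is simultaneously what removes the $\hat B_{reg}$ residue from (\ref{t+}) and what apportions the full-cover constant $\frac{16}{12}$ into $\frac{5}{12}+\frac{11}{12}$. You correctly flag this apportionment as the delicate point, but the bookkeeping you propose would not produce it; the computation has to go through the expansion of $S^{\pm}_B$ above rather than through a claim that the even part is annihilated by the flow.
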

The term $\hat{B}_{reg}$ appearing in the second formula is the regularization of bidifferential $\hat{B}(x,y)$ near the diagonal:
\begin{equation}\hat{B}_{reg}(x,x)=\Bigg(\hat{B}(x,y)-\frac{v(x)v(y)}{(\int^y_x v)^2} \Bigg)\Bigg|_{y=x}. \end{equation}

The formulas (\ref{148}) (\textbf{Proposition 4}) for the second derivatives of the Prym matrix  
resemble the expressions derived in \cite{bertola2019spaces} using the similar computational approach, or in \cite{Baraglia_2019} by means of topological recursion introduced in \cite{Eynard_2007}. 

This paper is arranged as follows: in \textbf{Section 2} we introduce a coordinate system on the space of generalized $SL(2)$ spectral covers. In \textbf{Section 3} we define a set of coordinates on the space of meromorphic quadratic differentials with variable base, discuss the geometry and main objects associated to the canonical double cover. \mbox{\textbf{Section 4}} is devoted to the definition and main properties of Hodge and Prym tau functions. In \textbf{Section 5} we derive the variational formulas on the space of $SL(2)$ spectral covers with a fixed base.

\textbf{Acknowledgements.} The author thanks his scientific advisor D. Korotkin for posing the problem and fruitful discussions.

\section{Spaces $\mathcal{M}_{SL(2)}[\mathbf{k}]$ of generalized SL(2) spectral covers}

We introduce a Riemann surface $\mathcal{C}$ of genus $g,$ with $m$ marked points $z_1,..., z_m$ and associated positive multiplicities $k_1,...,k_m.$ The Hamiltonians of Hitchin's systems are encoded by meromorphic $l$-differentials on $\mathcal{C}$ arising in the definition of \textit{spectral cover} $\mathcal{\hat{C}}$ given by the following locus in $T^* \mathcal{C}$
\begin{equation}\label{1}\mathcal{\hat{C}}=\{(x,v) \in \mathcal{C} \times T_x^* \mathcal{C}    \ | \ P_n(v)=0 \},\end{equation}
where
\begin{equation}\label{2}P_n(v)=v^n+Q_1 v^{n-1}+...+Q_{n-1}v +Q_n, \end{equation}
$Q_l$ is a meromorphic $l$-differential with poles at $z_j$ of order $lk_j.$ In the framework of \cite{hitchin} the equation (\ref{1}) is given by the characteristic polynomial $P_n(v)=\text{det}(\Phi -vI)$, where $\Phi$ is a \textit{Higgs field} on $\mathcal{C}.$  For the most general case of $GL(n)$ Hitchin's systems the differentials $Q_l$ are arbitrary. This case was recently studied in \cite{bertola2019spaces}. In this paper we focus on the important low rank subgroup $SL(2)$ with
the spectral cover defined by
\begin{equation}\label{3}v^2=Q,\end{equation}
with $Q$ being a quadratic differential with simple zeroes and poles at $z_j$ of order $2k_j$ thanks
to the genericity assumption. For a fixed base $\mathcal{C}$ and positions of poles we introduce the moduli space $\mathcal{M}_{SL(2)}[\mathbf{k}]$ of quadratic differentials with simple zeroes and poles of associate orders  $\mathbf{k}=(2k_1,...,2k_m).$ The degree $4g-4$ of the divisor class $(Q)$ implies a quadratic differential has \begin{equation}\label{4}r=4g-4+2\sum^m_{j=1}k_j\end{equation} simple zeroes denoted by $x_i$. For all such quadratic differentials the equation (\ref{3}) in the cotangent bundle $T^*\mathcal{C}$ defines double covering 
$\pi:\hat{\mathcal{C}}\xrightarrow{}\mathcal{C},$ branched at zeroes of $Q.$  The covering surface $\hat{\mathcal{C}}$ possesses a natural holomorphic involution $\mu:\hat{\mathcal{C}} \xrightarrow{}\hat{\mathcal{C}}.$ $v="\sqrt{Q}"$ is a single-valued meromorphic Abelian differential on $\hat{\mathcal{C}}$ and skew-symmetric under the involution: $v(x^{\mu})=-v(x)$. 

Differential $v$ has double zeroes at branch points $x_i$  which follows from the following short observation: given $\xi$ is a local coordinate near any zero $x_i$ on $\mathcal{C}$ s.t. $Q=\xi (d\xi)^2$, one has $v=\sqrt{Q}=\sqrt{\xi} d\xi =2 \hat{\xi}^2d\hat{\xi}$, where $\hat{\xi}=\sqrt{\xi}$ is a local coordinate near $x_i$ on $\hat{\mathcal{C}}$. Then the Riemann-Hurwitz formula implies the genus of the covering surface $\hat{\mathcal{C}}$ equals
\begin{equation}\label{5}\hat{g}=4g-3+\sum^m_{j=1} k_j. \end{equation}

Since the branch points of $\hat{\mathcal{C}}$ do not coincide with $z_j$ we have $\pi^{-1}(z_j)=\{z^{(1)}_j, z^{(2)}_j \}.$
The differential $v$ has on $\hat{\mathcal{C}}$ poles of order $k_j$ at both $z^{(1)}_j$ and $z^{(2)}_j.$ Denote by $\chi_j$ a local coordinate on $\mathcal{C}$ near $z_j.$ We can also use $\chi_j$ as local coordinate near both $z^{(1)}_j$ and $z^{(2)}_j.$  Consider
the singular parts of $v$ near $z^{(1)}_j:$

 \begin{equation}\label{6}v(\chi_j(x))\big|_{x \xrightarrow{} z^{(1)}_j}=\Bigg(\frac{C^{k_j}_j}{\chi^{k_j}_j}+\frac{C^{k_j-1}_j}{\chi^{k_j-1}_j}+...+\frac{C^{1}_j}{\chi_j}+O(1)\Bigg) d \chi_j.\end{equation} As $v$ is skew-symmetric under the involution and $\mu(z^{(1)}_j)=z^{(2)}_j,$ we have that near a point $z^{(2)}_j$ using the same coordinate $\chi_j$ we have the following expansion:

\begin{equation}\label{7}v(\chi_j(x))\Big|_{x \xrightarrow{} z^{(2)}_j}=\Bigg(-\frac{C^{k_j}_j}{\chi^{k_j}_j}-\frac{C^{k_j-1}_j}{\chi^{k_j-1}_j}-...-\frac{C^{1}_j}{\chi_j}+O(1)\Bigg) d \chi_j.\end{equation}

The dimension of $\mathcal{M}_{SL(2)}[\mathbf{k}]$ consists of the sums of dimensions of meromorphic and holomorphic parts of a quadratic differential $Q$ which equals
\begin{equation}\label{8}\text{dim}\mathcal{M}_{SL(2)}[\mathbf{k}]=2\sum^{m}_{j=1} k_j + (3g-3). \end{equation}
We introduce the following set of local coordinates on the moduli space $\mathcal{M}_{SL(2)}[\mathbf{k}]$:
\begin{equation}\label{9}\Big\{\{A_\alpha\}^{3g-3+\sum^m_{j=1} k_j}_{\alpha=1}, \{C^{l}_j \}, \ j=1,...,m, \  l=1,...,k_j \Big\}. \end{equation}
while $C^{l}_j$ are coefficients of singular parts of $v$ near $z^{(1)}_j$ (or $z^{(2)}_j$), $A_\alpha$ are integrals over skew-symmetric part of the $a$ - cycles on $\hat{\mathcal{C}}$  (defined explicitly in the following section)
\begin{equation}\label{10}A_\alpha=\oint_{a^{-}_{\alpha}} v.\end{equation}
\begin{remark}

While $C^{1}_j$ are coordinate-independent residues, $C^{l}_j, \ l \geq 2$ clearly depend on the choice of local coordinates $\chi_j.$ They are familiar from the theory of
algebro-geometric solutions of Kadomtsev-Petviashvili (KP) equation where they are called the KP-times. Integrals $A_\alpha$ depend on the choice of Torelli marking on the base surface $\mathcal{C}$. 
\end{remark}
\section{Spaces $\mathcal{Q}_{g,m}[\mathbf{k}]$ of meromorphic quadratic differentials}
\subsection{Geometry of double cover}
Denote by $\mathcal{Q}_{g,m}[\mathbf{k}]$ moduli space of pairs: a Riemann surface $\mathcal{C}$ of genus $g$ and a meromorphic quadratic differential $Q$ with poles at $\{z_j\}^m_{j=1}$ of corresponding orders  $\textbf{k}=\{2k_j\}^m_{j=1}$ and $4g-4+2\sum^m_{j=1}k_j$ simple zeroes. 
The dimension of $\mathcal{Q}_{g,m}[\mathbf{k}]=(\mathcal{C},Q)$ consists of $3g-3$ modular parameters of $\mathcal{C},$ $m$ positions of singularities, $2\sum^m_{j=1}k_j$ coefficients of singular parts and $3g-3$ parameters that form a holomorphic part of $Q.$ Thus, the total dimension is
\begin{equation}\label{11}\text{dim}\mathcal{Q}_{g,m}[\mathbf{k}]=6g-6+m+2\sum^m_{j=1}k_j. \end{equation}
The definition of covering surface $\hat{\mathcal{C}}$ of genus $\hat{g},$ projection $\pi$ and involution $\mu$ are in accordance with the previous section.
We decompose the first homology group of $H_1(\hat{\mathcal{C}}\backslash \{z^{(1)}_j, z^{(2)}_j \}^m_{j=1} )$ into
 \begin{equation}\label{12}H_1(\hat{\mathcal{C}}\backslash \{z^{(1)}_j, z^{(2)}_j \}^m_{j=1} )=H_+ \oplus H_- ,\end{equation}
which are the $+1$ and $-1$ eigenspaces of the map, induced by the involution $\mu.$ $\text{dim}(H_+)=2g+m-1$ and $\text{dim}(H_-)=6g-6+2\sum^m_{j=1}k_j+m. $
The canonical basis of $H_1(\hat{\mathcal{C}}\backslash \{z^{(1)}_j, z^{(2)}_j \}^n_{j=1} )$ can be chosen as follows:
\begin{equation}\label{13}\{a_k, a^\mu_k, \tilde{a}_l, b_k, b^\mu_k, \tilde{b}_l, t_j, t^\mu_j  \}, \quad k=1,...,g, \quad  l=1,...,2g-3+\sum^m_{j=1}k_j, \quad j=1,...,m.\end{equation}
Here
$\{ a_k, b_k, a^\mu_k, b^\mu_k \} $ is a lift of the canonical basis of cycles $\{ a_k ,b_k \}$ from $\mathcal{C}$ to $\hat{\mathcal{C}}$ such that
\begin{equation}\label{14}\mu_*a_k=a^\mu_k, \quad \mu_*b_k=b^\mu_k, \quad \mu_*\tilde{a}_l+\tilde{a}_l=\mu_*\tilde{b}_l+\tilde{b}_l=0. \end{equation}
$\{t_j, t_j^\mu \}$ is a lift of a small positively-oriented loop $t_j$ around $z_j$ on $\mathcal{C}.$ On double cover $\hat{\mathcal{C}},$  $t_j$ denotes a positively-oriented loop encircling $z^{(1)}_j,$ while $t_j^\mu$ is a small loop around $z^{(2)}_j$. In the group (\ref{12}) there is a single relation given by
\begin{equation}\label{15}\sum^{m}_{i=1}(t_j+t^\mu_j)=0. \end{equation}
The classes 
\begin{equation}\label{16}a^+_k=\frac{1}{2}(a_k+a^\mu_k), \quad b^+_k=\frac{1}{2}(b_k+b^\mu_k), \quad t^+_j=\frac{1}{2}(t_j+t^\mu_j)\end{equation} generate the group $H^+$ with the intersection index
\begin{equation}\label{17}a^+_i \circ b^+_k=\frac{1}{2}\delta_{ik}, \end{equation}
while $t^+_j$'s have zero intersection with all cycles. 
The following cycles
\begin{equation}\label{18}a^-_k=\frac{1}{2}(a_k-a^\mu_k), \quad b^-_k=\frac{1}{2}(b_k-b^\mu_k),\end{equation}
\begin{equation}\label{19}a^-_l=\frac{1}{\sqrt{2}}\tilde{a}_l, \quad b^-_l=\frac{1}{\sqrt{2}}\tilde{b}_l,\end{equation}

\begin{equation}\label{20}t^-_j=\frac{1}{2}(t_j-t^\mu_j) \end{equation}
are the generators of the group $H_-.$ Similarly, their intersection index is 
\begin{equation}\label{21}a^-_i \circ b^-_k=\frac{1}{2}\delta_{ik} \end{equation}
and all other intersections are zero.

The dimension of $H_-$ coincides with the dimension of $\mathcal{Q}_{g,m}[\mathbf{k}]$. We introduce the following set of \textit{period} (homological) local coordinates on $\mathcal{Q}_{g,m}[\mathbf{k}]$:
\begin{equation}\label{22}A_k=\oint_{a^-_k}v, \quad B_k=\oint_{b^-_k}v, \quad 2 \pi r_j= \oint_{t^-_j}v, \end{equation}
here $+ r_j$ and $- r_j$ are residues of $v$ near $z^{(1)}_j$  and $z^{(2)}_j$, respectively.

This coordinate system is a further generalization of period coordinates used to study moduli spaces of quadratic differentials in holomorphic case \cite{Bertola_2017}, and in cases of presence of simple \cite{korotkin2018periods} and  double poles \cite{Bertola_2020}, \cite{bertola2021wkb} .

\subsection{Standard meromorphic objects}
In this section we introduce basic meromorphic objects associated with the canonical double cover. The involution map $\mu:\hat{\mathcal{C}} \xrightarrow{} \hat{\mathcal{C}} $ yields a decomposition of the first cohomology group into even and odd parts \begin{equation}\label{23}H^{1,0}(\hat{\mathcal{C}})=H^{+}(\hat{\mathcal{C}})\oplus H^{-}(\hat{\mathcal{C}}).\end{equation} 
We shall denote by  \begin{equation}\label{24}\{\hat{u}_k, \hat{u}^{\mu}_k, \hat{w}_l \}, \quad k=1,...,g, \qquad l=g+1, ..., \hat{g}.\end{equation}  the basis of normalized holomorphic Abelian differentials on $\hat{\mathcal{C}}$ dual to the basis of cycles (\ref{13}). The differentials $u^+_k=\hat{u}_k+\hat{u}^{\mu}_k$ provide a basis in $H^{+}(\hat{\mathcal{C}}).$ These differentials are invariant under involution and naturally isomorphic to the space of holomorphic differentials on $\mathcal{C}$; $\dim(H^{+}(\hat{\mathcal{C}}))=g$. The space $H^{-}(\hat{\mathcal{C}})$ consists of holomorphic differentials on $\hat{\mathcal{C}}$ with a skew-symmetric property $u(x^{\mu})=-u(x).$ Such elements are called \textit{Prym holomorphic differentials}. The basis for $H^{-}(\hat{\mathcal{C}})$ is generated by
\begin{equation}\label{25}
u^-_l=
\begin{cases}
  \hat{u}_l-\hat{u}^{\mu}_l, \quad l=1,...,g,\\    
  \sqrt{2} \hat{w}_l, \quad l=g+1, ..., \hat{g}.   
\end{cases}
\end{equation}
Notice, that differentials $u^-_l$ are normalized over $a^-$- cycles and have vanishing $a^+$- periods. \begin{equation}\label{26}\dim(H^{-}(\hat{\mathcal{C}}))=3g-3+\sum^m_{j=1} k_j:=g^-.\end{equation}

Integrating these differentials over corresponding even and odd parts of $b$ - cycles we obtain Period and \textit{Prym matrices}
\begin{equation}\label{27}\Omega_{ij}=\oint_{b^+_j}u^+_i, \quad \Pi_{ij}=\oint_{b^-_j}u^-_i. \end{equation}
$\Omega$ is identified with the period matrix of the base surface $\mathcal{C}.$ The period matrix $\hat{\Omega}$ of the cover $\hat{\mathcal{C}}$ could be expressed via $\Omega$ and $\Pi$ applying an appropriate linear transformation \cite{Bertola_2020}.

We proceed with bidifferentials and projective connections on double covers.
Let $\hat{B}(x, y)$ denote the canonical (Bergman) bidifferential on $\hat{\mathcal{C}} \times \hat{\mathcal{C}}$ associated with
the homology basis (\ref{13}).  $\hat{B}(x, y)$  is symmetric, has the second order pole on the
diagonal $x = y$ with biresidue 1 and satisfies

\begin{equation}\label{28}\oint_{a_k} \hat{B}(\cdot, y) = \oint_{a^\mu_k} \hat{B}(\cdot, y) = \oint_{\tilde{a}_l} \hat{B}(\cdot, y)=0.\end{equation}
Equivalently, that can be written in terms of the $a^+$ and $a^-$- cycles:

\begin{equation}\label{29}\oint_{a^+_k} \hat{B}(\cdot, y) = \oint_{a^-_k} \hat{B}(\cdot, y)=0.\end{equation}
We put
\begin{equation}\label{30}{B}^+(x, y):=\hat{B}(x, y)+\mu^*_y\hat{B}(x, y), \end{equation}
\begin{equation}\label{31}{B}^-(x, y):=\hat{B}(x, y)-\mu^*_y\hat{B}(x, y).\end{equation}
(notation $\mu^*_y$
 means that we take the pullback with respect to the involution on the second factor in $\hat{\mathcal{C}} \times \hat{\mathcal{C}}$). ${B}^+(x, y)$ is the pullback of the
canonical bidifferential $B(x, y)$ on ${\mathcal{C}} \times {\mathcal{C}}$ (normalized relative to the $a$ - cycles on the base),
the bidifferential $B^{−}(x,y)$ is called the \textit{Prym bidifferential} \cite{korotkin2013tau}.
It follows from the definitions that $B^+(x,y)$ and $B^-(x,y)$  are symmetric and skew-symmetric under the involution in both arguments, respectively.

Near the diagonal $x=y$ on $\hat{\mathcal{C}} \times \hat{\mathcal{C}}$ we have

\begin{equation}\label{32}\hat{B}(x,y)=\Big(\frac{1}{(\xi(x)-\xi(y))^2}+\frac{1}{6}\hat{S}_B(\xi(x))+...\Big)d\xi(x)d\xi(y), \end{equation}
as $y \xrightarrow{} x$ for any local coordinate $\xi$ on $\hat{\mathcal{C}}.$ The term $\hat{S}_B$ transforms like a projective connection under the change of coordinates. It is called the Bergman projective connection. For $B^{\pm}(x, y)$ near the diagonal we have
\begin{equation}\label{33}{B}^{\pm}(x,y)=\Big(\frac{1}{(\xi(x)-\xi(y))^2}+\frac{1}{6}{S}^{\pm}_B(\xi(x))+...\Big)d\xi(x)d\xi(y), \end{equation}
with two projective connections $S^+_B$ and $S^-_B$ that are related by
\begin{equation}\label{34}S^{\pm}_B(x)=\hat{S}_B(x) \pm 6\mu^*_y\hat{B}(x, y)|_{y \xrightarrow{} x}. \end{equation}
We call $S^-_B$ the \textit{Prym projective connection}. Note that while $\hat{S}_B$ is holomorphic on $\hat{\mathcal{C}},$ $S^{\pm}_B$ have poles at branch points.

\subsection{Variational formulas on $\mathcal{Q}_{g,m}[\mathbf{k}]$}
 If $x$ is a point on $\hat{\mathcal{C}}$ which does not coincide with branch points $\{x_i\}$ and poles $\{ z^{(1)}_j, z^{(2)}_j \},$ then the local coordinate (also called "flat" coordinate) near $x$ could be taken as
\begin{equation}\label{35}z(x)=\int^x_{x_1}v,\end{equation}
where $x_1$ is a chosen "first" zero of $v$ (notice that in this case $v(x)=dz(x)$).  $z(x)$ could also be used as a coordinate on $\mathcal{C}$ outside branch points and poles. 

It is convenient to introduce the periods $\mathcal{P}_{s_i}=\oint_{s_i}v$ for $s_i$ being an element from the canonical basis of $H_{-}:$
\begin{equation}\label{36}\{s_i\}^{\text{dim}(H_-)}_{i=1}= \Big\{\{a^-_k, b^-_k \}^{g^-}_{k=1}, \{t^-_j \}^{m}_{j=1} \Big\}. \end{equation}
The dual basis $\{s^*_i\}$ is defined by the condition
\begin{equation}\label{37}s^*_i \circ s_j = \delta_{ij} \end{equation}
and is given by
\begin{equation}\label{38}\{s^*_i\}^{\text{dim}(H_-)}_{i=1}= \Big\{\{-2b^-_k, 2a^-_k \}^{g^-}_{k=1}, \{2\kappa^-_j \}^{m}_{j=1} \Big\}, \end{equation}
here $\kappa^-_j$ is a $1/2$ of the contour connecting poles $z^{(1)}_j$ with $z^{(2)}_j$ and skew-symmetric under the involution, not intersecting other contours. Such generator may be chosen as follows: connect $z^{(1)}_j$ with a branch point $x_j$ by an arc $l_j$ on a first copy of $\mathcal{C}$. Then join $x_j$ with $z^{(2)}_j$ on second copy of $\mathcal{C}$ by the antisymmetrized arc $\mu(l_j)$. 

The following variational formulas were derived in \cite{Bertola_2017} for a holomorphic differential $v$. In our framework for $v$ being meromorphic the same formulas apply since the proof does not rely on the presence of poles of $v$. Note that the variations of the differentials depending on the point $x \in \hat{\mathcal{C}}$ are computed assuming that the coordinate $z(x)$ is independent of the moduli. 

\begin{propos}

For a basis $\{s_i\}^{dim(H_{-})}_{i=1}$ of $H_{-}(\hat{\mathcal{C}}\backslash \{z^{(1)}_j, z^{(2)}_j \}^m_{i=1}) $ and its dual basis $\{ s^{*}_{i} \}^{dim(H_{-})}_{i=1}$ the following formulas hold on  $\mathcal{Q}_{g,m}[\mathbf{k}]$:
\begin{equation}\label{39}\frac{\partial \Omega_{ij}}{\partial \mathcal{P}_s}=\frac{1}{2}\oint_{s^*}\frac{u^+_i u^+_j}{v},\end{equation}
\begin{equation}\label{40}\frac{\partial \Pi_{ij}}{\partial \mathcal{P}_s}=\frac{1}{2}\oint_{s^*}\frac{u^-_i u^-_j}{v},\end{equation}
\begin{equation}\label{41}\frac{\partial u^{\pm}_j(x)} {\partial \mathcal{P}_s}\Big|_{z(x)=const}=\frac{1}{4 \pi i}\oint_{s^*}\frac{u_j^{\pm}(t) B^{\pm}(x,t)}{v(t)},\end{equation}

\begin{equation}\label{42}\frac{\partial B^{\pm}(x,y)}{\partial \mathcal{P}_s}\Big|_{z(x),z(y)=const}=\frac{1}{4 \pi i}\oint_{s^*}\frac{B^{\pm}(x,t)B^{\pm}(t,y)}{v(t)}.\end{equation}
\end{propos}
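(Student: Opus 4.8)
The plan is to obtain all four identities as instances of the Rauch-type variational calculus in period coordinates, exactly as in \cite{Bertola_2017} and \cite{korotkin2013tau}, organised around the flat coordinate $z(x)=\int_{x_1}^x v$, in which $v=dz$ and the canonical bidifferential has the universal local form $\hat B(x,y)=\big(1/(z(x)-z(y))^2+O(1)\big)\,dz(x)\,dz(y)$. The meaning of the prescription $|_{z(x)=\mathrm{const}}$ is that every derivative $\partial/\partial\mathcal P_s$ is taken with the points frozen in this coordinate; this annihilates the universal singular parts and turns $\partial_{\mathcal P_s}u^\pm_j$ and $\partial_{\mathcal P_s}B^\pm$ into globally defined meromorphic objects on $\hat{\mathcal C}$ (skew, resp.\ symmetric, under $\mu$) whose only singularities sit at the double zeros of $v$, i.e.\ at the branch points $x_i$.

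I regard (41) and (42) as the two master formulas, from which (39) and (40) follow. To establish (41) I would characterise $\Phi_j:=\partial_{\mathcal P_s}u^-_j|_z$ analytically: it is $\mu$-odd, has vanishing $a^-$-periods (the normalisation (25) is preserved along the deformation), and is meromorphic with poles at the $x_i$ (for the residue directions $s=t^-_j$ also at the $z^{(i)}_j$, via the open contour $\kappa^-_j$; see below). A $\mu$-odd meromorphic differential with vanishing $a^-$-periods is determined by its principal parts, so it suffices to match these with those of the right-hand side. This matching is precisely the residue computation of \cite{Bertola_2017}, carried out in the local model $v=2\hat\xi^2d\hat\xi$, $z\sim\tfrac23\hat\xi^3$ at a branch point; since $v$ already has double zeros at the $x_i$ in the holomorphic case treated there, that local analysis is unchanged here, which is the precise sense in which the meromorphic case costs nothing. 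The same scheme applied in both arguments yields (42).

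The numerical constants are the one genuinely new bookkeeping point, and I would track them as follows. Starting from the unconstrained Rauch formula on $\hat{\mathcal C}$, $\partial_{\oint_\sigma v}\hat u_j|_z=\tfrac{1}{2\pi i}\oint_{\sigma^*}\hat u_j\hat B/v$ with $\sigma^*\circ\sigma=1$ for the integral intersection form, and using that $v$ is odd (so $\oint_{\mu_*c}v=-\oint_c v$), the admissible direction $\partial/\partial\mathcal P_{a^-_k}$ lifts to $\partial_{\oint_{a_k}v}-\partial_{\oint_{a^\mu_k}v}$, whose dual cycle assembles into $s^*=-2b^-_k$ as in (38); this reproduces the half-integer intersection indices (17), (21) and gives $\partial_{\mathcal P_s}u^-_j=\tfrac{1}{2\pi i}\oint_{s^*}u^-_j\hat B/v$. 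The passage to the Prym kernel then supplies the missing factor $\tfrac12$: since $s^*\in H_-$ is odd and $u^-_j/v$ is even under $\mu$, the substitution $t\mapsto t^\mu$ gives $\oint_{s^*}u^-_j\,\mu^*_t\hat B/v=-\oint_{s^*}u^-_j\hat B/v$, hence $\oint_{s^*}u^-_jB^-/v=2\oint_{s^*}u^-_j\hat B/v$, converting $\tfrac{1}{2\pi i}$ into the $\tfrac{1}{4\pi i}$ of (41). Finally (39) and (40) are obtained by commuting $\partial_{\mathcal P_s}$ with $\oint_{b^\pm_j}$ in $\Omega_{ij}=\oint_{b^+_j}u^+_i$ and $\Pi_{ij}=\oint_{b^-_j}u^-_i$, inserting (41), and collapsing the inner period through $\oint_{b^\pm_j,x}B^\pm(x,t)=2\pi i\,u^\pm_j(t)$, which trades $\tfrac{1}{4\pi i}$ for the $\tfrac12$ displayed in (39)--(40).

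The main obstacle I anticipate is the careful verification that the whole variation localises at the zeros of $v$ and receives nothing from its poles, since this is the only feature not already covered by the holomorphic computation of \cite{Bertola_2017}. Near a pole $z^{(i)}_j$ of order $k_j$ the flat coordinate $z=\int v$ is itself singular (a pole for $k_j\geq2$, logarithmic for a simple pole), so one must check that freezing $z$ is well posed there and that the contour manipulations behind the Rauch formula are unaffected. This reduces to the observation that each integrand $u^\pm_iu^\pm_j/v$, $u^\pm_jB^\pm/v$, $B^\pm B^\pm/v$ carries a factor $1/v$, which has a zero of order $k_j$ at the pole; hence the integrands extend regularly (indeed vanish) across the $z^{(i)}_j$, the closed contours $s^*$ pick up no residues there, and the open generators $\kappa^-_j$ terminating at the poles give convergent integrals with no boundary contribution. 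Granting this, the residue analysis at the branch points is identical to the holomorphic case and the four formulas follow.
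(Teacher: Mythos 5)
Your proposal is correct and follows the same route as the paper, which offers no independent proof of Proposition 1 but simply invokes the holomorphic computation of Bertola--Korotkin--Norton and remarks that it survives unchanged because the poles of $v$ contribute nothing. You in fact supply the two points the paper leaves implicit --- the bookkeeping of the half-integer intersection indices (17), (21) that produces the factors $\tfrac12$ and $\tfrac{1}{4\pi i}$, and the observation that every integrand carries a factor $1/v$ and hence extends regularly across the $z^{(i)}_j$ --- and both are verified correctly.
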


\section{Hodge and Prym tau-functions on $\mathcal{Q}_{g,m}[\mathbf{k}]$} 
The Bergman tau-function on moduli spaces of differentials was originally defined as a higher genus generalization of the Dedekind eta function on elliptic surface. Beginning from the moduli space of holomorphic Abelian differentials \cite{Kokotov_2009} it was extended to the case of Abelian differential with arbitrary divisor \cite{Kalla_2014}; further generalizations cover moduli spaces of holomorphic quadratic \cite{Bertola_2017} and N-differentials \cite{korotkin2017tau}. Recently in \cite{baker2020class} Bergman tau-function was applied to describe the discriminant class of $Sp(2n)$ Hitchin’s spectral covers. In our framework we consider a moduli space of quadratic meromorphic differentials with simple zeroes. 
With this space we associate two naturally arising Bergman tau-functions: Hodge and Prym. The first tau-function is a holomorphic section of the determinant line bundle of the Hodge
vector bundle; the second tau-function is a section of the determinant of the Prym vector bundle, hence their names. While in \cite{Bertola_2020}, \cite{bertola2021wkb} the presence of double poles was considered, here we allow to have poles of arbitrary even order.

For the purpose of the explicit definition of tau-functions we introduce a special system of local coordinates on the double cover called $\textit{distinguished}$. It differs from that used in (\ref{6}) to code a singular part of the differential $v.$ While the first one will be used to define tau functions on $\mathcal{Q}_{g,m}[\mathbf{k}],$ the latter appears when deriving variations of the tau functions on  $\mathcal{M}_{SL(2)}[\mathbf{k}]$ in Section 5.

\subsection{Distinguished local coordinates on $\mathcal{C}$ and $\hat{\mathcal{C}}$}

The quadratic differential $Q$ on the base curve $\mathcal{C}$ allows us to define the set of distinguished local coordinates on both surfaces $\mathcal{C}$ and $\hat{\mathcal{C}}.$ Denote by $\{\tilde{z}_k\}^n_{k=1} \subset \{{z}_j\}^m_{j=1}  $ a subset of poles of of order 2. Then the divisor of $Q$ looks as follows:
\begin{equation}\label{43}(Q)=\sum^{r+m}_{i=1}{d}_i {q}_i\equiv\sum^r_{i=1} x_i - \sum^n_{j=1} 2 \tilde{z}_j- \sum^{m-n}_{j=1} 2k_j z_j, \ k_j \geq 2.\end{equation} The divisor of Abelian differential $v$ on $\hat{\mathcal{C}}$ is given by
\begin{equation}\label{44}(v)=\sum^{r+2m}_{i=1}\hat{d}_i \hat{q}_i\equiv \sum^r_{i=1} 2x_i - \sum^{n}_{j=1} (\tilde{z}^{(1)}_j+\tilde{z}^{(2)}_j) - \sum^{m-n}_{j=1} k_j (z^{(1)}_j+z^{(2)}_j), \ k_j \geq 2. \end{equation}

$\bullet$ Near any point $x_0 \in \hat{\mathcal{C}}$ such that $\pi(x_0) \not \in (Q)$ the local coordinates on ${\mathcal{C}}$ and $\hat{\mathcal{C}}$ can be chosen as $z(x)=\int^{x}_{x_0}v.$

$\bullet$ Near a branch point $x_i$ local parameters $\hat{\zeta}_i$ on $\hat{\mathcal{C}}$ and ${\zeta}_i$ on ${\mathcal{C}}$ are given by
\begin{equation}\label{45}\hat{\zeta}_i(x)=\Bigg(\int^x_{x_i}v\Bigg)^{\frac{1}{3}}, \quad {\zeta}_i(x)=\hat{\zeta}^2_i(x)=\Bigg(\int^x_{x_i}v\Bigg)^{\frac{2}{3}}. \end{equation}

$\bullet$ In the neighborhood of a double pole $\tilde{z}_j$ on $\mathcal{C}$ and corresponding simple poles $(\tilde{z}^{(1)}_j, \tilde{z}^{(2)}_j)$ on $\hat{\mathcal{C}}$ the local coordinate is
\begin{equation}\label{46}\zeta_j(x)=\exp \Big(\frac{1}{ \tilde{r}_j} \int^x_{x_1}v\Big), \end{equation}
where $x_1$ is a chosen first zero of $v;$ $\ \tilde{r}_j$ is a residue of $v$ on $\hat{\mathcal{C}}$ defined in (\ref{22}).

\begin{remark}
Definitions of tau functions depend on the choice of local coordinates near poles $\tilde{z}_j$. To define these coordinates uniquely we on $\mathcal{C}$ connect first zero $x_1$ with a chosen first double pole $\tilde{z}_1$ by a branch cut $\gamma_1$, then connect $\tilde{z}_1$ with the remaining double poles $\{\tilde{z}_j\}^n_{j=2}$ by $\gamma_j$ forming a tree graph $G$. Then we lift $G$ to $\hat{\mathcal{C}}$ via $\pi^{-1}$ and denote the corresponding lift by $\hat{G}=\pi^{-1}(G).$
\end{remark}

$\bullet$ If $k_j \geq 2$ one has a pole of order $2k_j$ at $z_j$ on $\mathcal{C}$ and corresponding poles $(z^{(1)}_j, z^{(2)}_j)$ of order $k_j$   on $\hat{\mathcal{C}}$ with nontrivial residues $\pm r_j.$ The local coordinate on both $\mathcal{C}$ and $\hat{\mathcal{C}}$ in this case is defined from the following transcendental equations:
\begin{equation}\label{47}v=\Bigg(\frac{1-k_j}{\zeta^{k_j}_j}+ \frac{r_j}{\zeta_j} \Bigg) d \zeta_j\end{equation} 
or
\begin{equation}\label{48}\frac{1}{\zeta^{k_j-1}_j}  +r_j \ln \zeta_j= \int^x_{x_1}v. \end{equation}

\subsection{Definition and properties of $\tau^+$ and $\tau^-$}
Denote by $E(x,y)$ the Prime form on $\mathcal{C},$ by $\mathcal{A}_x$ the Abel map with $x$ as a base point and by $K^x$ the vector of Riemann constants. Introduce two vectors $\textbf{r}, \textbf{s} \in \frac{1}{2}\Z^g$ such that
\begin{equation}\label{49}\frac{1}{2}\mathcal{A}_x((Q))+2K^x+\Omega \textbf{r}+ \textbf{s}=0\end{equation}
and the following notations:

\begin{equation}\label{50}E(x,q_i)=\lim_{y \xrightarrow{}q_i}E(x,y)\sqrt{d\zeta_i(y)}, \end{equation}
\begin{equation}\label{51}E(q_i,q_j)=\lim_{x \xrightarrow{} q_i,y \xrightarrow{}q_j}E(x,y)\sqrt{d\zeta_i(x)}\sqrt{d\zeta_j(y)}, \end{equation}
where $\zeta_i$ is the distinguished local parameter on $\mathcal{C}$ near a point $q_i$ from the list (\ref{43}).

Consider the following   multi-valued $g(1-g)/2$ - differential $C(x)$ on $\mathcal{C}$

\begin{equation}\label{52}C(x)=\frac{1}{W(x)}\Bigg(\sum^g_{i=1} u_i(x) \frac{\partial}{\partial w_i} \Bigg)^g \theta(w, \Omega) \Big|_{w=K^x}, \quad W(x):= det \Bigg[ \frac{d^{k-1}}{dx^{k-1}}u_j\Bigg]_{1 \leq j, k \leq g}\end{equation}
here $\Omega$ is the period matrix of the base curve $\mathcal{C}$, $\{u_j\}^g_{j=1}$ are normalized holomorphic differentials on $\mathcal{C}$ and  $\theta$ is the corresponding theta-function.

\begin{definition}
For a given choice of Torelli marking and tree graph $G$ on $\mathcal{C}$ \textit{the Hodge tau-function} $\tau^+$ is given by the following expression
:

\begin{equation}\label{53}\tau^+(\mathcal{C},Q)=C^{2/3}(x)\Bigg(\frac{Q(x)}{\prod^{r+m}_{i=1}E^{d_i}(x,q_i)} \Bigg)^{(g-1)/6}\prod_{i<j}E(q_i,q_j)^{d_i d_j/24} \ \ e^{\frac{- \pi i}{6}<\Omega \textbf{r},\textbf{s}>-\frac{2 \pi i}{3}<\textbf{r},K^x>}.\end{equation}
\end{definition}

Similarly, we denote by $\hat{E}(x,y)$ and $\hat{C}(x)$ the Prime form and multi-valued differential, by $\hat{\mathcal{A}}_x$ the Abel map and by $\hat{K}^x$ the vector of Riemann constants on the double cover $\hat{\mathcal{C}}$. Introduce two vectors $\hat{\textbf{r}}, \hat{\textbf{s}} \in \Z^{\hat{g}}$ such that
\begin{equation}\label{54}\hat{\mathcal{A}}_x((v))+2\hat{K}^x+\hat{\Omega} \hat{\textbf{r}}+ \hat{\textbf{s}}=0\end{equation}
and the notations:

\begin{equation}\label{55}\hat{E}(x,\hat{q}_i)=\lim_{y \xrightarrow{}\hat{q}_i}\hat{E}(x,y)\sqrt{d\hat{\zeta}_i(y)}, \end{equation}
\begin{equation}\label{56}\hat{E}(\hat{q}_i,\hat{q}_j)=\lim_{x \xrightarrow{} \hat{q}_i,y \xrightarrow{}\hat{q}_j}\hat{E}(x,y)\sqrt{d\hat{\zeta}_i(x)}\sqrt{d\hat{\zeta}_j(y)}, \end{equation}

where $\hat{\zeta}_i$ is the distinguished local parameter on $\hat{\mathcal{C}}$ near $\hat{q}_i$ from the list (\ref{44}).

\begin{definition}
For a given choice of Torelli marking and tree graph $\hat{G}$ on $\hat{\mathcal{C}}$ the tau-function $\hat{\tau}$ is given by the following formula:

\begin{equation}\label{57}\hat{\tau}(\hat{\mathcal{C}},v)=\hat{C}^{2/3}(x)\Bigg(\frac{v(x)}{\prod^{{r+2m}}_{i=1}\hat{E}^{\hat{d}_i}(x,\hat{q}_i)} \Bigg)^{(\hat{g}-1)/3}\prod_{i<j}\hat{E}(\hat{q}_i,\hat{q}_j)^{\hat{d}_i \hat{d}_j/6} \ \ e^{-\frac{ \pi i}{6}<\hat{\Omega} \hat{\textbf{r}},\hat{\textbf{s}}>-\frac{2 \pi i}{3}<\hat{\textbf{r}},\hat{K}^x>}.\end{equation}
\end{definition}
\begin{definition}
\textit{The Prym tau-function} is defined by

\begin{equation}\label{58}\tau^-:=\frac{\hat{\tau}(\hat{\mathcal{C}},v)}{\tau^+(\mathcal{C},Q)}. \end{equation}
\end{definition}

The following properties for $\tau^{\pm}$ generalize the ones in \cite{bertola2021wkb} where only double poles of $Q$ were considered:

$\bullet$ The expressions (\ref{53}) and (\ref{58}) for $\tau^{\pm}$ do not depend on the point $x$ although it seems that they do \cite{Kokotov_2009}.

$\bullet$ Let the matrices 
$\begin{pmatrix}A^+ & B^+ \\ C^+ & D^+  \end{pmatrix},  \begin{pmatrix}A^- & B^- \\ C^- & D^-  \end{pmatrix}$ $\in Sp(2g, \Z)$ denote symplectic transformations of the Torelli marking in $H_{+}(\hat{\mathcal{C}})$ and $H_{-}(\hat{\mathcal{C}})$, respectively. Then the tau functions transform as  

\begin{equation}\label{59}\tau^+ \xrightarrow{}  \epsilon \ \text{det}(C^+ \Omega +D^+) \ \tau^+, \qquad \tau^- \xrightarrow{}  \epsilon \ \text{det}(C^- \Pi +D^-) \ \tau^-,\end{equation}
where $\epsilon^{48}=1$; $\Omega, \Pi$ are Period and Prym matrices defined by (\ref{27}) \cite{Bertola_2020}, \cite{korotkin2013tau}.

$\bullet$ The expressions 
\begin{equation}\label{60}(\tau^+)^{48} \prod^n_{k=1} (d \zeta_k (\tilde{z}_k) )^4, \qquad (\tau^-)^{48} \prod^n_{k=1} (d \zeta_k (\tilde{z}_k) )^4  \end{equation} are invariant under the choice of local parameters $\zeta_k$ near $\tilde{z}_k$ \cite{Bertola_2020}.

$\bullet$ The functions $\tau^{\pm}$ satisfy the following homogeneity properties \cite{Kalla_2014}, \cite{korotkin2020bergman}:
\begin{equation}\label{61}\tau^+(\mathcal{C},\epsilon Q)= \epsilon^{\kappa^{+}}\tau^+(\mathcal{C},Q), \quad  \kappa^+=\frac{1}{48}\sum_{d_i \neq -2}\frac{d_i(d_i+4)}{d_i+2},  \end{equation}
\begin{equation}\label{62}\tau^-(\mathcal{C},\epsilon Q)= \epsilon^{\kappa^{-}}\tau^-(\mathcal{C}, Q), \quad  \kappa^-=\frac{1}{24}\sum_{\hat{d}_i \neq -1}\frac{\hat{d}_i(\hat{d}_i+2)}{\hat{d}_i+1}- \kappa^+. \end{equation}

$\bullet$ The expressions for $\tau^{\pm}$ depend on the choice of the first zero $x_1$ and on the integration paths between $x_1$ and poles $\tilde{z}_i$
which are chosen in the complement of the tree graph $\hat{G}$;  the change of the graph $\hat{G}$ within the fundamental polygon $\hat{\mathcal{C}}_0$ affect the coordinates $\zeta_i$ 
near  $\tilde{z}_i$ by a factor of the form

\begin{equation}\label{63}\exp{ \Bigg\{2 \pi i\sum_{j,k}n_{jk}\frac{\tilde{r}_j}{\tilde{r}_k} \Bigg\}}, \end{equation}
where $n_{jk}$ is a matrix of integers \cite{Kalla_2014}.
\subsection{Differential equations for $\tau^{+}$ and  $\tau^{-}$   }
Consider the regularization of Bergman and Prym bidifferential $B^{\pm}$ near the diagonal:
\begin{equation}\label{64}B^{\pm}_{reg}(x,x)=\Bigg(B^{\pm}(x,y)-\frac{v(x)v(y)}{(\int^y_x v)^2} \Bigg)\Bigg|_{y=x}. \end{equation}
The differential equations for $\tau^{\pm}$ with respect to the coordinates (\ref{22}) on $\mathcal{Q}_{g,m}[\mathbf{k}]$ are given by the following theorem:

\begin{theorem}
 Hodge and Prym tau-functions $\tau^{\pm}$ defined by (\ref{53}) and (\ref{58}) satisfy the following system of equations on $\mathcal{Q}_{g,m}[\mathbf{k}]$:
\begin{equation}\label{65} \frac{\partial \log \tau^{\pm}}{\partial A_j }=\frac{1}{4 \pi i}\oint_{b^-_j}\frac{B^{\pm}_{reg}}{v}, \quad \frac{\partial \log \tau^{\pm}}{\partial B_j }=-\frac{1}{4 \pi i}\oint_{a^-_j}\frac{B^{\pm}_{reg}}{v},\end{equation}
for j=$1,...,g^-$
\begin{equation}\label{66}\frac{\partial \log \tau^{\pm}}{\partial \, (2 \pi i \tilde{r}_k) }=-\frac{1}{4 \pi i}\oint_{\kappa^-_k}\Bigg(\frac{B^{\pm}_{reg}}{v}+\frac{1}{12\tilde{r}^2_k}v \Bigg),\end{equation}
for $k=1,...,n.$

\begin{equation}\label{67}\frac{\partial \log \tau^{\pm}}{\partial \, (2 \pi i r_k) }=-\frac{1}{4 \pi i}\oint_{\kappa^-_k}\frac{B^{\pm}_{reg}}{v} ,\end{equation}
for $k=n+1,...,m.$ Here $\tilde{r}_k$ are residues near simple poles, while $r_k$ are residues near higher order poles.

\end{theorem}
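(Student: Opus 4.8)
The plan is to differentiate the explicit product formulas (\ref{53}), (\ref{57}), (\ref{58}) and to recognize each logarithmic derivative as a contour integral of the regularized bidifferential (\ref{64}) divided by $v$. Since $\tau^-=\hat\tau/\tau^+$, it is enough to treat $\hat\tau(\hat{\mathcal{C}},v)$ and $\tau^+(\mathcal{C},Q)$ separately and subtract; the splitting (\ref{30})--(\ref{31}) and the identity (\ref{34}) then turn $\hat S_B$ into $S^\pm_B$, so the two computations differ only by $\hat B\mapsto B^\pm$, with the compatibility $\hat B_{reg}=\tfrac12(B^+_{reg}+B^-_{reg})$ already predicting the ``$+$'' formula for $\tau^+$ and the ``$-$'' formula for $\tau^-$. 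All variations are taken with the flat coordinate $z(x)=\int^x v$ held fixed, as required for (\ref{39})--(\ref{42}), so that the elementary blocks vary through those formulas; recall that $v$ is a Prym differential with vanishing $a^+,b^+$-periods, so the coordinates (\ref{22}) indeed span only the $H_-$-directions.

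The core is the variation of the multivalued differential $\hat C(x)$ (resp.\ $C(x)$), whose logarithmic derivative produces the Bergman projective connection. Differentiating $\log\hat\tau$, one collects the Rauch-type variations of the period matrix (\ref{40}), of the normalized differentials (\ref{41}), of the bidifferential (\ref{42}), and of the Wronskian--theta factor; the $x$-dependent pieces must cancel, reflecting the $x$-independence of $\tau^\pm$, and what survives is the coincidence limit of (\ref{42}). By the expansion (\ref{33}) and the definition (\ref{64}), this limit equals $\tfrac16(\hat S_B-S_v)=\hat B_{reg}$ divided by $v$, integrated over the cycle dual to the coordinate being varied. Pairing with the dual basis (\ref{38}), whose entries are $-2b^-_k,\,2a^-_k,\,2\kappa^-_k$, and fixing the constant $\tfrac1{4\pi i}$ reproduces the contours $b^-_j$, $a^-_j$, $\kappa^-_j$ in (\ref{65})--(\ref{67}); applying (\ref{34}) to pass from $\hat B_{reg}$ to $B^\pm_{reg}$ distributes these between $\tau^+$ and $\tau^-$.

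It remains to account for the distinguished coordinates near the poles, which move with the moduli and thus contribute once we differentiate the prime-form factors $\hat E(x,\hat q_i)$ in (\ref{57}) together with their homogeneity exponents. Away from the poles this is invisible since $z(x)$ is fixed; near a double pole $\tilde z_k$ the coordinate (\ref{46}) is $\zeta_k=\exp(\tfrac1{\tilde r_k}\int^x v)$ and depends explicitly on $\tilde r_k$, and carrying this dependence through $\hat E(x,\tilde z^{(1)}_k)$, $\hat E(x,\tilde z^{(2)}_k)$ yields, on top of $\oint_{\kappa^-_k}B^\pm_{reg}/v$, precisely the correction $-\tfrac1{4\pi i}\oint_{\kappa^-_k}\tfrac1{12\tilde r_k^2}v$ in (\ref{66}). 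For a higher-order pole the coordinate is fixed transcendentally by (\ref{47})--(\ref{48}) and carries no free residue-dependent logarithmic factor of this type, so the analogous term is absent and one obtains (\ref{67}); tracking the coefficients $C^l_j$ through (\ref{47})--(\ref{48}) is what extends the double-pole case of \cite{Bertola_2020}, \cite{bertola2021wkb} to arbitrary even order.

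The hardest step is this coincidence-limit identity: passing from (\ref{42}) to $B^\pm_{reg}$ requires subtracting the flat projective connection $S_v$ generated by the moving coordinate $z=\int^x v$ and checking that the subtraction matches (\ref{64}) uniformly, including in neighbourhoods of the poles where both $v$ and $\zeta$ degenerate. This is also where the presence of $\tfrac1{12\tilde r_k^2}v$ in (\ref{66}) and its absence in (\ref{67}) is decided, and where care is needed to preserve the $x$-independence of $\tau^\pm$ at every intermediate stage.
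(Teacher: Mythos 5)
The first thing to note is that the paper itself does not prove this theorem: immediately after the statement it writes that ``this is a natural extension of the results proven in \cite{Kokotov_2009}, \cite{korotkin2013tau} and \cite{Kalla_2014}'' and that the simple-pole case of (\ref{66}) appears in \cite{bertola2021wkb}, with $\log\tau^-$ obtained ``by a minor modification''; the only argument actually supplied is Remark~4, which explains the extra term in (\ref{66}) versus its absence in (\ref{67}) through the local behaviour of $B^{\pm}_{reg}/v$ at the poles. Your outline is therefore not being measured against a written proof but against the method of the cited references, and at that level it is a faithful reconstruction: differentiate the explicit formulas (\ref{53}), (\ref{57}), (\ref{58}) using the Rauch-type variations (\ref{39})--(\ref{42}), check cancellation of the $x$-dependent pieces, identify the surviving coincidence limit with $\frac{1}{6}(S^{\pm}_B-S_v)/v=B^{\pm}_{reg}/v$ via (\ref{33}) and (\ref{64}), and pair with the dual basis (\ref{38}). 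The identity $\hat B_{reg}=\tfrac12(B^+_{reg}+B^-_{reg})$ and the consistency with $\log\hat\tau=\log\tau^++\log\tau^-$ are correctly used as a sanity check.

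That said, as a proof your text has genuine gaps rather than merely compressed steps. The ``hardest step'' you name --- the cancellation of all $x$-dependent contributions from $\hat C(x)$, the prime-form factors and the theta-characteristic exponential, and the emergence of exactly $B^{\pm}_{reg}/v$ with the coefficient $\frac{1}{4\pi i}$ --- is the entire content of the theorem and is only asserted; this is a long computation in \cite{Kokotov_2009} and \cite{Kalla_2014}, and the new cases here (poles of order $2k_j$ with $k_j\ge 2$) are precisely where it has not been done before. More concretely, your explanation of why (\ref{67}) has no correction term is not right as stated: the distinguished coordinate near a higher-order pole is defined by (\ref{48}), $\zeta_j^{1-k_j}+r_j\ln\zeta_j=\int^x_{x_1}v$, which \emph{does} depend on the residue $r_j$ through a logarithmic term, so ``carries no free residue-dependent logarithmic factor'' is false. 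The correct mechanism, recorded in the paper's Remark~4, is local: $B^{\pm}_{reg}/v$ acquires simple poles at $\tilde z^{(1)}_k,\tilde z^{(2)}_k$ (forcing the regularizing addition $\frac{1}{12\tilde r_k^2}v$ in (\ref{66}) for the $\kappa^-_k$-integral to converge), whereas at a pole of order $k_j\ge 2$ it has a zero of order $k_j-2$, so the integral in (\ref{67}) is already finite and no correction arises. You should replace your coordinate-dependence argument by this local expansion, and in the double-pole case verify that the coefficient produced by differentiating $\hat E(x,\tilde z^{(i)}_k)$ in the coordinate (\ref{46}) is exactly $-\frac{1}{48\pi i\,\tilde r_k^2}\oint_{\kappa^-_k}v$ rather than merely ``of this type''.
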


This is a natural extension of the results proven in \cite{Kokotov_2009}, \cite{korotkin2013tau} and \cite{Kalla_2014} to the case of higher order poles. While equations (\ref{66}) with respect to the residues near simple poles were recently present for $\log \tau^+ $ in \cite{bertola2021wkb}, the similar variations for $\log \tau^-$ could be obtained by a minor modification.

\begin{remark}
Local analysis shows that near $\tilde{z}^{(1)}_j$ and $\tilde{z}^{(2)}_j$  the expressions $\frac{B^{\pm}_{reg}}{v}$ gain simple poles. Then the addition appearing in (\ref{66}) regularize the integrand at the endpoints of the integration path. This issue does not emerge in case of higher order poles with $k_j \geq 2$, where both $\frac{B^{\pm}_{reg}}{v}$ gain a zero of order $k_j-2$ and equations (\ref{67}) are valid. 

\end{remark}

\section{Variational formulas on $\mathcal{M}_{SL(2)}[\mathbf{k}]$}
The assumption that moduli of the base curve $\mathcal{C}$ are kept fixed allows us to well-define the variations on $\mathcal{M}_{SL(2)}[\mathbf{k}]$ for any Abelian differential associated with spectral curve $\hat{\mathcal{C}}.$ For any fixed local chart $D$ and corresponding local coordinate $\xi$ on $\mathcal{C},$ we can lift $D$ to $\hat{\mathcal{C}}$ via $\pi^{-1}$ and use $\xi$ as a local coordinate on each connected component of $\pi^{-1}(D)$ outside the branch points. Then if $p_i$ is any coordinate on $\mathcal{M}_{SL(2)}[\mathbf{k}]$ from the list (\ref{9}), the variation of an Abelian differential $w=f(\xi) d \xi$ is defined by
\begin{equation}\label{68}\frac{d w}{d p_i}=\frac{d f(\xi)}{d p_i}d\xi, \end{equation}
assuming that the coordinate $\xi$ does not depend on $p_i.$ Such definition is clearly independent of the choice $\xi$ on $D.$
Before proving the following technical proposition, we introduce additional (meromorphic) Abelian differentials attributed to the double cover:
let $\Big \{\{\hat{w}^l_j\}^{k_j}_{l=2}\Big \}^m_{j=1}$  denote the second-kind differentials on $\hat{\mathcal{C}}$ with prescribed singular part normalized over $a$ - cycles of the homology basis (\ref{13}). That is,

\begin{equation}\label{69} \hat{w}^l_j(x)=\Big( \frac{1}{\chi_j^l}+O(1) \Big)d \chi_j, \quad \quad x \xrightarrow{} z_{j}^{(1)},\end{equation}
\begin{equation}\oint_{a_k} \hat{w}^l_j = \oint_{a^\mu_k}\hat{w}^l_j = \oint_{\tilde{a}_l}\hat{w}^l_j=0.\end{equation}

By $\{\hat{\eta}_j \}^m_{j=1}$ we denote  normalized over $a$ - cycles third-kind differentials with simple poles at $z_j^{(1)}$ and $x_1$ with residues $+1$ and $-1$, respectively.

Now put

\begin{equation}\label{70}{w}^{l-}_j:={\hat{w}^l_j-\mu^*\hat{w}^l_j},\end{equation}

\begin{equation}\label{71}{\eta}^{-}_j:={\hat{\eta}_j-\mu^*\hat{\eta}_j}.\end{equation}
We will call ${w}^{l-}_j$ the normalized \textit{Prym second-kind differential} and ${\eta}^{-}_j$ the normalized \textit{Prym third-kind differential}. The following lemma outlines the properties of defined objects
\begin{lemma}
The differentials ${w}^{l-}_j$ and ${\eta}^{-}_j$ have the following properties:

(i)
\begin{equation}\label{72}\mu^*{w}^{l-}_j=-{w}^{l-}_j,\end{equation}
(ii)
\begin{equation}\label{73}\mu^*{\eta}^{-}_j=-{\eta}^{-}_j,\end{equation}
(iii)
\begin{equation}\label{74}\oint_{a^{+}_k} {w}^{l-}_j=\oint_{a^{-}_k} {w}^{l-}_j=\oint_{b^{+}_k}{w}^{l-}_j=0,\end{equation}
(iv)
\begin{equation}\label{75}\oint_{a^{+}_k} {\eta}^{-}_j=\oint_{a^{-}_k} {\eta}^{-}_j=\oint_{b^{+}_k} {\eta}^{-}_j=0,\end{equation}
(v)
the differential ${w}^{l-}_j$ has the following singular parts:
\begin{equation}\label{76}{w}^{l-}_j(x)=\Bigg( \frac{1}{\chi_j^l}+O(1) \Bigg)d \chi_j, \quad \quad x \xrightarrow{} z_{j}^{(1)},\end{equation}
\begin{equation}\label{77}{w}^{l-}_j(x)=-\Bigg( \frac{1}{\chi_j^l}+O(1) \Bigg)d \chi_j, \quad \quad x \xrightarrow{} z_{j}^{(2)}\end{equation}
and it is holomorphic elsewhere,

(vi)
the differential ${\eta}^{-}_j$ is of third kind with simple poles at $z_j^{(1)}$ and $z_j^{(2)}$ with residues $+1$ and $-1$, respectively,

(vii)
\begin{equation}\label{78}\oint_{b^{-}_k} {w}^{l-}_j=2 \pi i \frac{{u}^{(l-2)-}_k(z^{(1)}_j)}{(l-1)!}, \qquad  \oint_{b^{-}_k} {\eta}^{-}_j= \pi i\int^{z^{(1)}_j}_{z^{(2)}_j} u^-_k,\end{equation}
where ${u}^{(l-2)-}_k(z^{(1)}_j)$ stands for $\frac{d^{l-2}}{d \chi^{l-2}_j}\Big(\frac{u^-_k}{d \chi_j} \Big)\Big|_{\chi_j=0}.$

\end{lemma}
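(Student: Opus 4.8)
The plan is to obtain properties (i)--(vi) directly from the antisymmetrization construction and to reserve the Riemann bilinear relations for the period formulas (vii), which carry the real content. First I would dispose of (i) and (ii): since $\mu$ is an involution, $\mu^*\circ\mu^*=\mathrm{id}$ on differentials, so $\mu^*{w}^{l-}_j=\mu^*\hat{w}^l_j-\hat{w}^l_j=-{w}^{l-}_j$, and identically for ${\eta}^-_j$. With antisymmetry in hand, (iii) and (iv) are immediate: for any $\mu$-antisymmetric $\omega$ one has $\oint_{a^\mu_k}\omega=\oint_{a_k}\mu^*\omega=-\oint_{a_k}\omega$, whence $\oint_{a^+_k}\omega=\tfrac12(\oint_{a_k}+\oint_{a^\mu_k})\omega=0$ and likewise $\oint_{b^+_k}\omega=0$; the vanishing of the $a^-$-periods then follows from the normalization $\oint_{a_k}\hat{w}^l_j=\oint_{a^\mu_k}\hat{w}^l_j=\oint_{\tilde a_l}\hat{w}^l_j=0$ (and the same for $\hat\eta_j$) carried through the antisymmetrization.

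For (v) and (vi) I would use that $\chi_j=\chi_j\circ\pi$ is $\mu$-invariant, so in this coordinate $\mu$ interchanges the neighborhoods of $z^{(1)}_j$ and $z^{(2)}_j$ while acting as the identity on $\chi_j$. Hence $\mu^*\hat{w}^l_j$ is regular at $z^{(1)}_j$ and has singular part $(\chi_j^{-l}+O(1))d\chi_j$ at $z^{(2)}_j$; subtracting yields (v). For (vi) the only delicate point is the prospective pole of $\eta^-_j$ at the base point $x_1$: since $x_1$ is a branch point it is fixed by $\mu$, and a local computation in the ramified coordinate $\hat\xi$ (with $\mu:\hat\xi\mapsto-\hat\xi$) shows that $\mu^*\hat\eta_j$ again has residue $-1$ there, so the $x_1$-poles cancel in $\eta^-_j$, leaving precisely the simple poles at $z^{(1)}_j,z^{(2)}_j$ with residues $+1,-1$.

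The heart is (vii), where I would apply the Riemann bilinear reciprocity on the closed surface $\hat{\mathcal{C}}$ of genus $\hat g$ with the canonical basis (\ref{13}), pairing the normalized holomorphic Prym differential $u^-_k$ against $\beta\in\{{w}^{l-}_j,{\eta}^-_j\}$:
\[
\sum_I\Big(\oint_{A_I}u^-_k\oint_{B_I}\beta-\oint_{B_I}u^-_k\oint_{A_I}\beta\Big)=2\pi i\sum_{p}\mathrm{Res}_p\Big(\big(\textstyle\int^x u^-_k\big)\,\beta\Big).
\]
Both $u^-_k$ and $\beta$ are $\mu$-antisymmetric and $\{a_k,a^\mu_k,\tilde a_l;b_k,b^\mu_k,\tilde b_l\}$ is symplectic, so the two $\mu$-conjugate families contribute equally while the $\tilde a_l,\tilde b_l$ pair carries a factor $2$ from $a^-_l=\tilde a_l/\sqrt2$; using $\oint_{a^+}u^-_k=\oint_{a^+}\beta=0$ the entire left-hand side collapses to $2\sum_i\big(\oint_{a^-_i}u^-_k\oint_{b^-_i}\beta-\oint_{b^-_i}u^-_k\oint_{a^-_i}\beta\big)$, which by $\oint_{a^-_i}u^-_k=\delta_{ik}$ and $\oint_{a^-_i}\beta=0$ reduces to $2\oint_{b^-_k}\beta$.

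It then remains to evaluate the residue side. For $\beta={w}^{l-}_j$, expanding $\int^x u^-_k$ against the singular part $\chi_j^{-l}d\chi_j$ at $z^{(1)}_j$ isolates the coefficient $\tfrac{1}{(l-1)!}\tfrac{d^{l-2}}{d\chi_j^{l-2}}(u^-_k/d\chi_j)$; the antisymmetry of $u^-_k$ in the $\mu$-invariant coordinate forces the residue at $z^{(2)}_j$ (where both $w^{l-}_j$ and the relevant Taylor coefficients of $u^-_k/d\chi_j$ change sign) to equal the one at $z^{(1)}_j$, so the two add to $\tfrac{2}{(l-1)!}u^{(l-2)-}_k(z^{(1)}_j)$; dividing by the overall factor $2$ gives the first formula of (\ref{78}). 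For $\beta={\eta}^-_j$ the two simple poles give $\int^{z^{(1)}_j}_{z^{(2)}_j}u^-_k$, and the same factor $2$ yields $\oint_{b^-_k}\eta^-_j=\pi i\int^{z^{(1)}_j}_{z^{(2)}_j}u^-_k$. I expect the main obstacle to be exactly this bookkeeping: producing the single correct overall $\tfrac12$ requires simultaneously tracking the doubling from the $\mu$-conjugate cycle families, the $\sqrt2$ rescaling of the $\tilde a_l,\tilde b_l$ cycles, and the doubling of the residue contributions from $z^{(1)}_j,z^{(2)}_j$, while checking that in the second-kind case the antisymmetry of $u^-_k$ makes the two polar residues add rather than cancel.
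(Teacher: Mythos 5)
Your proposal is correct and follows essentially the same route as the paper: (i)--(vi) from the definitions and the fixed-point/interchange behaviour of $\mu$, and (vii) from the Riemann bilinear identity combined with skew-symmetry, reducing to the residues at $z_j^{(1)},z_j^{(2)}$, which you evaluate to the same coefficients. The only difference is bookkeeping — you run the RBI over the full canonical basis (\ref{13}) and track the doubling from the $\mu$-conjugate and rescaled cycles, whereas the paper works directly in $H_-$ and absorbs the same factor through the intersection index $a^-_i\circ b^-_k=\tfrac12\delta_{ik}$; both yield the identical prefactor $\pi i$ in front of the residue sum.
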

\begin{proof}
(i)-(vi) follows from the definitions and the facts that $\mu(z_j^{(1)})=z_j^{(2)}, \ \mu(x_1)=x_1.$

To prove (vii) we apply the Riemann Bilinear Identity (later RBI) to ${w}^{l-}_j$ with $u^-_k$ $(\eta^-_j$ with $u^-_k)$ by integrating them over the cycles in $H_{-}(\hat{\mathcal{C}})$ in the following way:
consider f.e.
\begin{equation}\label{79}\oint_{b^{-}_k} {w}^{l-}_j=\sum^{g^-}_{j=1}\Big[\oint_{b^-_j}{w}^{l-}_j \oint_{a^-_j}u^-_k - \oint_{a^-_j}{w}^{l-}_j \oint_{b^-_j}u^-_k\Big].\end{equation}
We can assume that the boundary of the universal cover $\hat{\mathcal{C}}_0$ of $\hat{\mathcal{C}}$ is invariant under the involution $\mu.$ Then we can extend this sum by adding integrals over $H_+(\hat{\mathcal{C}}).$ The integrands are skew-symmetric with respect to involution, so their integrals over the cycles in $H_+(\hat{\mathcal{C}})$ give zero contribution. Then by the Stokes' theorem (\ref{79}) could be represented as the sum over residues near poles inside $\hat{\mathcal{C}}_0$:
\begin{equation}\label{80} \frac{1}{2} (2 \pi i)\underset{(z_j^{(1)}, z_j^{(2)})}{res}\Bigg({w}^{l-}_j\int^{x}_{p_0}u^-_{k}\Bigg), \end{equation}
where the factor $1/2$ is due to the intersection index $a^-_i \circ b^-_k=\frac{1}{2}\delta_{ik};$ $p_0$ is a reference point. This expression does not depend on the choice of the point $p_0$ since the difference between two choices is the sum of the residues of ${w}^{l-}_j$ which is clearly zero. For convenience, we put  $p_0=x_1.$ Then skew-symmetry of both differentials in (\ref{80}) implies that  the residues at $z_j^{(1)}$ and $z_j^{(2)}$ are equal. Their computation leads to the result. The second formula is proven by analogy, noticing that

\begin{equation}\label{81}2\int^{z^{(1)}_j}_{x_1} u^-_k=\int^{z^{(1)}_j}_{z^{(2)}_j} u^-_k. \end{equation}
\end{proof}

\begin{propos}
The following variational formulas with respect to the coordinates (\ref{9}) on $\mathcal{M}_{SL(2)}[\mathbf{k}]$ hold:
\begin{equation}\label{82}\frac{\partial v}{\partial A_{\alpha} }=u^-_{\alpha}, \end{equation}
\begin{equation}\label{83}\frac{\partial v}{\partial C^{l}_j }={w}^{l-}_j, \ l \geq 2, \end{equation}
\begin{equation}\label{84}\frac{\partial v}{\partial C^{1}_j }={\eta}^{-}_j. \end{equation}

\end{propos}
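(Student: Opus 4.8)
The plan is to identify each derivative $\partial v/\partial p_i$ with one of the differentials from the preceding lemma by matching the three data that determine a Prym differential uniquely: its symmetry type under $\mu$, its principal parts over the poles $z_j$, and its $a^-$-periods. Uniqueness then forces the stated identities (\ref{82})--(\ref{84}).

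First I would verify that $\partial v/\partial p_i$, defined through (\ref{68}) on $\hat{\mathcal{C}}$ minus the branch points, extends to a genuine meromorphic Abelian differential across the branch points $x_i$; this is the one genuinely technical point. Writing $Q=q(\xi,p)(d\xi)^2$ with $q$ having a simple zero at $\xi=\xi_i(p)$, we have $v=\sqrt{q}\,d\xi$, so $\partial_{p_i}v=\tfrac{1}{2}q^{-1/2}(\partial_{p_i}q)\,d\xi$. Near $x_i$ this looks singular because $q^{-1/2}\sim(\xi-\xi_i)^{-1/2}$, but passing to the genuine coordinate $\hat{\xi}$ on $\hat{\mathcal{C}}$ with $\hat{\xi}^2\sim\xi-\xi_i$ turns $d\xi$ into $2\hat{\xi}\,d\hat{\xi}$, exactly cancelling the half-order pole and leaving a differential holomorphic (and generically nonzero) at $x_i$. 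Hence $\partial v/\partial p_i$ has poles only over the $z_j$.

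Second, I differentiate the relation $v(x^\mu)=-v(x)$, in which both $\mu$ and the chart $\xi$ are moduli-independent, to get $\mu^*(\partial_{p_i}v)=-\partial_{p_i}v$: every derivative is a Prym (skew-symmetric) differential, so in particular its $a^+$-periods vanish automatically, since $\oint_{a^+_k}w=\tfrac12\big(\oint_{a_k}+\oint_{a_k^\mu}\big)w=0$ for $\mu^*w=-w$. I then read off the remaining data case by case. Holding the $C^l_j$ fixed keeps the principal parts of $v$ fixed, so $\partial v/\partial A_\alpha$ is holomorphic with $\oint_{a^-_\beta}\partial v/\partial A_\alpha=\partial A_\beta/\partial A_\alpha=\delta_{\alpha\beta}$; holding the $A_\alpha$ fixed gives $\oint_{a^-_\beta}\partial v/\partial C^l_j=0$, while (\ref{6}) shows the principal part of $\partial v/\partial C^l_j$ at $z^{(1)}_j$ is $\chi_j^{-l}\,d\chi_j$ (with no residue for $l\ge2$, and residue $+1$ for $l=1$), and skew-symmetry fixes the matching principal part at $z^{(2)}_j$. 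The derivative of the holomorphic tail of $v$ near $z^{(1)}_j$ stays holomorphic, so no spurious lower-order poles appear.

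Finally I invoke uniqueness: two Prym differentials with the same principal parts and the same $a^-$-periods differ by a holomorphic Prym differential with vanishing $a^-$-periods, which is zero because the $a^-$-period map is an isomorphism on $H^-(\hat{\mathcal{C}})$ of dimension $g^-$. Comparing the data just computed with (\ref{25})--(\ref{26}) for $u^-_\alpha$, with (\ref{74}),(\ref{76}),(\ref{77}) for $w^{l-}_j$, and with (\ref{73}),(\ref{75}) together with property (vi) of the lemma for $\eta^-_j$, yields (\ref{82}), (\ref{83}) and (\ref{84}) respectively. I expect the branch-point regularity in the first step to be the main obstacle; everything afterward is bookkeeping of principal parts and periods.
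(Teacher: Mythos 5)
Your proposal is correct and follows essentially the same route as the paper: establish holomorphy of $\partial v/\partial p$ at the branch points by the local computation in the coordinate $\hat{\xi}$ with $\hat{\xi}^2=\xi-\xi_i$ (your cancellation of the half-order pole against $d\xi=2\hat{\xi}\,d\hat{\xi}$ is the same calculation as the paper's expansion (\ref{85})--(\ref{86})), then match singular parts and $a$-periods and conclude by uniqueness of normalized differentials. The extra details you supply (skew-symmetry of $\partial_{p}v$ under $\mu$, hence vanishing $a^+$-periods) are left implicit in the paper but are consistent with its argument.
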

\begin{proof}
The proof follows the idea in \cite{bertola2019spaces}.
Consider an expansion of the Abelian differential $v$ near a branch point $x_i$ on $\hat{\mathcal{C}}.$ The coordinate could be taken as $(\xi-\xi_i)^{1/2}.$ While $\xi$ is moduli-independent, $\xi_i=\xi(\pi(x_i))$ changes when $\hat{\mathcal{C}}$ varies. Thus, the dependence of $\xi_i$ on moduli should be taken into account. $v$ has a double zero at each $x_i,$ then locally it can be written as follows:
\begin{equation}\label{85}v(\xi)=(\xi-\xi_j)(a_0+a_1(\xi-\xi_j)+...)d(\xi-\xi_j)^{1/2}=\frac{1}{2}(a_0(\xi-\xi_j)^{1/2}+a_1(\xi-\xi_j)^{3/2}+...)d\xi. \end{equation}
Performing the differentiation by the rule (\ref{68}) with respect to any coordinate $p$ we obtain.

\begin{equation}\label{86}\frac{\partial v}{\partial p}=\frac{1}{4} \Bigg(-\frac{a_0 (\xi_j)'_p}{(\xi-\xi_j)^{1/2}}+O(1) \Bigg)d\xi=-\frac{1}{2}\Bigg(a_0 (\xi_j)'_p+O(1) \Bigg)d(\xi-\xi_j)^{1/2}.\end{equation}
From this formula it is clear that $\frac{\partial v}{\partial p}$ is holomorphic at the branch points.

The differential $\frac{\partial v}{\partial A_{\alpha}}$ also holomorphic at all  poles $z_j$ since the singular parts of $v$ do not depend on the moduli $A_{\alpha}.$ Moreover, all periods of $\frac{\partial v}{\partial A_{\alpha}}$ vanish except for the period over $a^-_{\alpha},$ which is 1. Take the difference $w:=\frac{\partial v}{\partial A_{\alpha}}-u^-_{\alpha}.$ The differential $w$ is holomorphic and its $a^-$ and $a^+$ periods vanish by construction. Thus, we have that $w \equiv 0,$ and so (\ref{82}) holds. 

Consider $\frac{\partial v}{\partial C^{l}_j }.$ Its singular part coincides with the one of ${w}^{l-}_j.$ Also its $a^-$-periods vanish since $ C^{k}_j$ are independent of $\{A_{\alpha}\}^{g^-}_{\alpha=1}.$ Similarly to the previous argument we obtain (\ref{83}).

Finally, $\frac{\partial v}{\partial C^{1}_j }$ equals to the Prym third-kind differential ${\eta}^{-}_j$ again due to the coincidence of singular parts.
\end{proof}

\subsection{Variations of Prym matrix}
In this section we discuss variations of the Prym matrix $\Pi$ on the spaces of $SL(2)$ spectral covers. While the derivatives with respect to the $a^-$- periods reproduce the Prym version of Donagi-Markman cubic \cite{Donagi_1996}, variations with respect to residues and KP-times extend this result to meromorphic case and involve Prym meromorphic differentials.
\begin{theorem}
The variations of the Prym matrix $\Pi$ on the space $\mathcal{M}_{SL(2)}[\mathbf{k}]$ with respect to the coordinates (\ref{9}) take the following form:

\begin{equation}\label{87}\frac{\partial \Pi_{\alpha \beta}}{\partial A_{\gamma}}=-\pi i \sum^r_{i=1}\underset{x_i}{res}\Bigg(\frac{ {u}^-_{\alpha}  {u}^-_{\beta}  {u}^-_{\gamma}}{d{\xi} \: d( v/d{\xi})}\Bigg),\end{equation} 

\begin{equation}\label{88}\frac{\partial \Pi_{\alpha \beta}}{\partial C^{1}_j}=-\pi i \sum^r_{i=1}\underset{x_i}{res}\Bigg(\frac{  {u}^-_{\alpha}  {u}^-_{\beta} {\eta}^-_j }{d{\xi} \: d( v/d{\xi})}\Bigg),\end{equation} 

\begin{equation}\label{89}\frac{\partial \Pi_{\alpha \beta}}{\partial  C^{l}_j}=-\pi i \sum^r_{i=1}\underset{x_i}{res}\Bigg(\frac{ {u}^-_{\alpha}  {u}^-_{\beta}  {w}^{l-}_j}{d{\xi} \: d( v/d{\xi})}\Bigg),\end{equation} 
where $r=4g-4+2\sum^m_{j=1}k_j$ is a number of branch points (zeroes of $v$). $\xi$ denotes a local coordinate on $\mathcal{C}$ near a branch point $x_i.$ The above formulas do not depend on the choice of $\xi.$ 
\end{theorem}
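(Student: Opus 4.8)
**The plan is to combine the chain rule with the first-order variational formula (\ref{40}) from Proposition 1, then convert the resulting contour integral into a sum of residues at the branch points.**

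The starting point is that the coordinate systems on $\mathcal{Q}_{g,m}[\mathbf{k}]$ and on $\mathcal{M}_{SL(2)}[\mathbf{k}]$ are related: the period coordinates $\mathcal{P}_s = \oint_s v$ appearing in Proposition 1 include the $A_\alpha$ (and the residues/KP-times correspond to the singular data). For the fixed-base setting, the relevant derivatives are directional derivatives in the $A_\gamma$, $C^1_j$, $C^l_j$ directions. I would begin from formula (\ref{40}),
\[
\frac{\partial \Pi_{\alpha\beta}}{\partial \mathcal{P}_s} = \frac{1}{2}\oint_{s^*}\frac{u^-_\alpha u^-_\beta}{v},
\]
and combine it with Proposition 3, which identifies $\partial v/\partial A_\gamma = u^-_\gamma$, $\partial v/\partial C^1_j = \eta^-_j$, and $\partial v/\partial C^l_j = w^{l-}_j$. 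The key idea is that moving in the $A_\gamma$ direction shifts $v$ by $u^-_\gamma$, and the dual cycle $s^*$ associated to $a^-_\gamma$ is (up to a factor from (\ref{38})) the cycle $-2b^-_\gamma$; so the contour integral over $s^*$ should be re-expressed using the relation between periods and residues.

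The central computational step is converting $\tfrac12\oint_{s^*}\frac{u^-_\alpha u^-_\beta}{v}$ into $-\pi i \sum_i \underset{x_i}{\mathrm{res}}\big(\tfrac{u^-_\alpha u^-_\beta u^-_\gamma}{d\xi\, d(v/d\xi)}\big)$. To do this I would use the flat coordinate $z(x)=\int^x_{x_1}v$ from (\ref{35}), in which $v=dz$, and observe that $\frac{u^-_\alpha u^-_\beta}{v}$ has poles precisely at the double zeroes $x_i$ of $v$ (since $v$ has double zeroes at the branch points by the computation following (\ref{5})). Deforming the integration contour $s^*$ so that it wraps the branch points, the period integral picks up residue contributions at the $x_i$. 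The antisymmetrization under the involution $\mu$ and the intersection factor $\tfrac12$ from (\ref{21}) combine to give the overall prefactor $-\pi i$. The denominator $d\xi\, d(v/d\xi)$ arises naturally: near $x_i$, writing $v=f\,d\xi$ with $f$ vanishing to order one on $\mathcal{C}$ (order two on $\hat{\mathcal{C}}$), the quantity $d(v/d\xi)=df$ supplies the correct local measure so that the residue is coordinate-independent, which is exactly the asserted invariance under the choice of $\xi$.

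\textbf{The hard part will be the careful contour manipulation and bookkeeping of factors.} One must justify deforming $s^*$ (an $H_-$ cycle) into small loops around the branch points without crossing the poles $z^{(1)}_j, z^{(2)}_j$ of $v$, check that the symmetric ($H_+$) contributions cancel by the skew-symmetry of the integrand under $\mu$ (as in the proof of Lemma 1), and track the factors of $2$ and $\pi i$ coming from (\ref{38}) and the half-integer intersection indices. The coordinate-independence claim also requires verifying that although $v/d\xi$ and $d(v/d\xi)$ individually depend on $\xi$, the residue of the full expression does not — this follows because $u^-_\alpha u^-_\beta u^-_\gamma / (d\xi\, d(v/d\xi))$ is a genuine meromorphic object on $\hat{\mathcal{C}}$ whose residue at $x_i$ is intrinsic. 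Once the period-to-residue conversion is established for the $A_\gamma$ case, formulas (\ref{88}) and (\ref{89}) follow verbatim by replacing $u^-_\gamma$ with $\eta^-_j$ and $w^{l-}_j$ respectively, since Proposition 3 treats all three coordinate directions on equal footing.
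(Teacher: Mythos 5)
There is a genuine gap at the heart of your argument: the step where you convert $\tfrac12\oint_{s^*}\frac{u^-_\alpha u^-_\beta}{v}$ into a sum of residues by ``deforming the integration contour $s^*$ so that it wraps the branch points'' does not work. For $s=a^-_\gamma$ the dual cycle $s^*=-2b^-_\gamma$ is a nontrivial homology class in $H_-$; it is not null-homologous in the complement of the singularities, so it cannot be contracted onto small loops around the $x_i$, and a single $b$-period of the meromorphic differential $\frac{u^-_\alpha u^-_\beta}{v}$ is simply not expressible as a sum of its residues (indeed those residues all vanish by skew-symmetry, while the period does not). What you are missing is the chain rule: on the fixed-base leaf $\mathcal{M}_{SL(2)}[\mathbf{k}]\subset\mathcal{Q}_{g,m}[\mathbf{k}]$ the coordinates $B_i$ become dependent functions of the $A_i$, so
$$\frac{d\Pi_{\alpha\beta}}{dA_\gamma}=\frac{\partial\Pi_{\alpha\beta}}{\partial A_\gamma}\Big|_{B=\mathrm{const}}+\sum_{i=1}^{g^-}\frac{\partial\Pi_{\alpha\beta}}{\partial B_i}\,\frac{\partial B_i}{\partial A_\gamma},$$
with $\partial B_i/\partial A_\gamma=\oint_{b^-_i}u^-_\gamma$ coming from Proposition 3. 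It is only the full bilinear combination $\tfrac12\sum_i\big(-\oint_{2b^-_i}\frac{u^-_\alpha u^-_\beta}{v}\oint_{a^-_i}u^-_\gamma+\oint_{2a^-_i}\frac{u^-_\alpha u^-_\beta}{v}\oint_{b^-_i}u^-_\gamma\big)$ that has the shape of a Riemann Bilinear Identity and therefore equals $-\pi i\sum_i\operatorname*{res}_{x_i}\big(\frac{u^-_\alpha u^-_\beta}{v}\int^x_{p_0}u^-_\gamma\big)$; the residue structure comes from the RBI applied to the pair of differentials $\frac{u^-_\alpha u^-_\beta}{v}$ and $u^-_\gamma$, not from a contour deformation of one dual cycle. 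Your subsequent local analysis near $x_i$ (even expansions in $\hat\xi$ with $\hat\xi(x^\mu)=-\hat\xi(x)$, double zero of $v$, coordinate invariance of the final expression) is correct and matches the paper.

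A second, smaller omission: the three coordinate directions are not quite on an equal footing. For $C^1_j$ the chain rule produces an additional direct term $\pi i\int_{z^{(1)}_j}^{z^{(2)}_j}\frac{u^-_\alpha u^-_\beta}{v}$ (the variation of $\Pi$ along the $t^-_j$-period), and the RBI now also picks up residues at $z^{(1)}_j,z^{(2)}_j$ because $\eta^-_j$ has simple poles there; these two contributions cancel, which is why (\ref{88}) ends up with residues only at the branch points. For $C^l_j$ with $l\ge 2$ one must check instead that the residues of the RBI integrand at the poles of $w^{l-}_j$ vanish. Saying that (\ref{88}) and (\ref{89}) ``follow verbatim'' skips exactly these checks.
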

\begin{proof}
Let us proof (\ref{87}). On the subspace $\mathcal{M}_{SL(2)}[\mathbf{k}] \subset \mathcal{Q}_{g,m}[\mathbf{k}]$ the coordinates $\{B_i\}$ become dependent functions of $\{A_i \}.$ Thus, we compute the derivative of Prym matrix applying the chain rule as follows:
\begin{equation}\label{90}\frac{d \Pi_{\alpha \beta}}{d A_{\gamma}}=\frac{\partial \Pi_{\alpha \beta}}{\partial A_{\gamma}}\Big|_{\{B_i\}^{g^-}_{i=1}=const}+\sum^{g^-}_{i=1}\frac{\partial \Pi_{\alpha \beta}}{\partial B_i} \frac{\partial B_i}{\partial A_{\gamma}}.\end{equation}
Using the variational formulas (\ref{40}) and (\ref{82}) we further rewrite this expression as
\begin{equation}\label{91}\frac{1}{2}\sum^{g^-}_{i=1} \Bigg( -\oint_{2b^-_i} \frac{{u}^-_{\alpha}  {u}^-_{\beta}}{v} \oint_{a^-_i} u^-_{\gamma} + \oint_{2a^-_i} \frac{{u}^-_{\alpha}  {u}^-_{\beta}}{v} \oint_{b^-_i} u^-_{\gamma} \Bigg).\end{equation}
Similarly to Lemma 1, this sum could be represented as a sum of residues inside the fundamental polygon $\hat{\mathcal{C}}_0$ of $\hat{\mathcal{C}}:$ 

\begin{equation}\label{92}
-\pi i\sum^r_{i=1} \underset{x_i}{res}\Bigg(\frac{ {u}^-_{\alpha}  {u}^-_{\beta}}{ v}\int^{x}_{p_0}u^-_{\gamma}\Bigg),
\end{equation}
where $p_0$ is a reference point. To simplify the residues at first notice that
\begin{equation}\label{93} \underset{x_i}{res}\Bigg(\frac{ {u}^-_{\alpha}  {u}^-_{\beta}}{ v}\int^{x}_{p_0}u^-_{\gamma}\Bigg)= \underset{x_i}{res}\Bigg(\frac{ {u}^-_{\alpha}  {u}^-_{\beta}}{ v}\int^{x}_{x_i}u^-_{\gamma}\Bigg),\end{equation}
(taking the difference of the above expressions it equals $\underset{x_i}{res}\Big(\frac{ {u}^-_{\alpha}  {u}^-_{\beta}}{ v}\Big)$ up to an explicit constant $C_i,$ then the skew-symmetry of the differential $\frac{ {u}^-_{\alpha}  {u}^-_{\beta}}{ v}$ implies it has vanishing residue).
We introduce a local coordinate $\hat{\xi}$ near a branch point on $\hat{\mathcal{C}}.$ $\hat{\xi}$ can be chosen such that \mbox{$\hat{\xi}(x^\mu)=-\hat{\xi}(x).$} 
Then for $u^-, v$ being skew-symmetric it implies that they have expansion by even powers of $\hat{\xi}$. Moreover, $v$ has double zeroes at branch points. Then one has:
\begin{equation}\label{94} {u}^-_{\alpha}=((u_0)_{\alpha}+O(\hat{\xi}^2))d\hat{\xi},\end{equation}
\begin{equation}\label{95} v=\hat{\xi}^2(v_0+O(\hat{\xi}^2))d\hat{\xi}.\end{equation}
And
\begin{equation}\label{96}\frac{ {u}^-_{\alpha}  {u}^-_{\beta}}{ v}= \Big(\frac{(u_0)_{\alpha}(u_0)_{\beta}}{v_0}\frac{1}{\hat{\xi}^2}+O(1) \Big)d\hat{\xi},\end{equation}
\begin{equation}\label{97}\int^{x}_{x_i}u^-_{\gamma}= \Big( (u_0)_{\gamma}\hat{\xi}+O(\hat{\xi}^3)\Big).\end{equation}
Therefore, the sum (\ref{92}) becomes
\begin{equation}\label{98}-\pi i \sum^r_{i=1}\frac { (u_0)_{\alpha}(u_0)_{\beta}(u_0)_{{\gamma}}}{v_0}, \end{equation}
which could be rewritten in invariant form (\ref{87})                (there $\xi-\xi_i=\hat{\xi}^2,$ and the expression $d \xi d(v/d\xi)$ is proportional exactly to $v_0 \hat{\xi} (d\hat{\xi})^2$).

The formula (\ref{88}) could be proven in a similar way. Applying the chain rule, we write:
\begin{equation}\label{99}\frac{d \Pi_{\alpha \beta}}{d C^1_j}=\frac{\partial \Pi_{\alpha \beta}}{\partial C^1_j} \Big|_{{\{A_i, B_i\}^{g^-}_{i=1}=const}}+\sum^{g^-}_{i=1} \Bigg( \frac{\partial \Pi_{\alpha \beta}}{\partial A_i} \frac{\partial A_i}{\partial C^1_j}+ \frac{\partial \Pi_{\alpha \beta}}{\partial B_i} \frac{\partial B_i}{\partial C^1_j}\Bigg).\end{equation}
According to (\ref{6}) and  (\ref{40})
\begin{equation}\label{100}\frac{\partial \Pi_{\alpha \beta}}{\partial C^1_j}=(2\pi i) \ \frac{1}{2} \int_{2\kappa^-_j} \frac{{u}^-_{\alpha}  {u}^-_{\beta}}{v}=\pi i \int^{z_j^{(2)}}_{z_j^{(1)}} \frac{{u}^-_{\alpha}  {u}^-_{\beta}}{v}. \end{equation}
Then using variation formulae (\ref{40}) and (\ref{84})
\begin{equation}\label{101}\frac{d \Pi_{\alpha \beta}}{d C^1_j}=\pi i \int^{z_j^{(2)}}_{z_j^{(1)}} \frac{{u}^-_{\alpha}  {u}^-_{\beta}}{v}+\frac{1}{2}\sum^{g^-}_{i=1} \Bigg( -\oint_{2b^-_i} \frac{{u}^-_{\alpha}  {u}^-_{\beta}}{v} \oint_{a^-_i} {\eta}^-_j + \oint_{2a^-_i} \frac{{u}^-_{\alpha}  {u}^-_{\beta}}{v} \oint_{b^-_i} {\eta}^-_j \Bigg).\end{equation}
With the help of the RBI we obtain the sum over residues at the branch points plus the residues at $\{z_j^{(1)}, z_j^{(2)} \}$ which conveniently cancel with the first integral.

Finally, to prove (\ref{89}), we use variations (\ref{40}) with (\ref{83}) to write
\begin{equation}\label{102}\frac{d \Pi_{\alpha \beta}}{d C^l_j}=\sum^{g^-}_{i=1} \Bigg( \frac{\partial \Pi_{\alpha \beta}}{\partial A_i} \frac{\partial A_i}{\partial C^l_j}+ \frac{\partial \Pi_{\alpha \beta}}{\partial B_i} \frac{\partial B_i}{\partial C^l_j}\Bigg)=\end{equation}
\begin{equation}\label{103}=\frac{1}{2}\sum^{g^-}_{i=1} \Bigg( -\oint_{2b^-_i} \frac{{u}^-_{\alpha}  {u}^-_{\beta}}{v} \oint_{a^-_i} {w}^{l-}_j + \oint_{2a^-_i} \frac{{u}^-_{\alpha}  {u}^-_{\beta}}{v} \oint_{b^-_i} {w}^{l-}_j \Bigg)\end{equation}
which due to the RBI is equal to the required expression  (notice that the residues at the poles $\{z_j^{(1)}, z_j^{(2)} \}$ of ${w}^{l-}_j$ vanish).
\end{proof}
\begin{remark}
By assumption the base curve $\mathcal{C}$ is kept fixed, so variations of its Period matrix $\Omega$ on $\mathcal{M}_{SL(2)}[\mathbf{k}]$ must be zero. Consider, f.e. $\frac{\partial \Omega_{\alpha \beta}}{\partial A_{\gamma}}.$ Then, similarly to the derivation of (\ref{87}), applying variational formulas (\ref{40}) it equals to
\begin{equation}\label{104}\frac{\partial \Omega_{\alpha \beta}}{\partial A_{\gamma}}=-\pi i\sum^n_{i=1} \underset{x_i}{res}\Bigg(\frac{ {u}^+_{\alpha}  {u}^+_{\beta}}{ v}\int^{x}_{p_0}u^-_{\gamma}\Bigg).\end{equation} 
The local analysis shows that each differential $u^+$ gains simple zeros at branch points $x_i$ when being lifted from the base curve $\mathcal{C}.$ Thus, the residues over $x_i$ vanish.
\end{remark}
The difference between the dimensions of $\mathcal{Q}_{g,m}[\mathbf{k}]$ and moduli space of curves $\mathcal{M}_{g,m}(\mathcal{C})$
\begin{equation}\label{105}dim(\mathcal{Q}_{g,m}[\mathbf{k}])-dim(\mathcal{M}_{g,m}(\mathcal{C})) =(6g-6+m+2\sum^m_{j=1}k_j)-(3g-3+m)=3g-3+2\sum^m_{j=1}k_j\end{equation}
implies the existence of $3g-3+2\sum^m_{j=1}k_j$ linearly independent fields on $\mathcal{Q}_{g,m}[\mathbf{k}]$ that preserve a complex structure of $\mathcal{C}$ and positions of poles. It is instructive to find these fields.

\begin{propos}

The following vector fields defined on $\mathcal{Q}_{g,m}[\mathbf{k}]$
 \begin{equation}\label{106}V_{A_\gamma}=\frac{\partial}{\partial {A_\gamma}} + \sum^{g^-}_{i=1} \Pi_{\gamma i} \frac{\partial}{\partial {B_i}}, \qquad \gamma=1,...,g^-, \end{equation}

\begin{equation}\label{107}V_{C^1_j}=\frac{\partial}{\partial  (2 \pi r_j)} +  \pi i \sum^{g^-}_{i=1} \Bigg(\int^{z^{(1)}_j}_{z^{(2)}_j} u^-_i \Bigg) \frac{\partial}{\partial {B_i}}, \qquad  j=1,...,m,\end{equation}

\begin{equation}\label{108}V_{C^l_j}= 2 \pi i \sum^{g^-}_{i=1}\Bigg(  \frac{{u}^{(l-2)-}_i(z^{(1)}_j)}{(l-1)!} \Bigg)  \frac{\partial}{\partial  B_i}, \qquad  j=1,...,m, \quad l=2,...,k_j.\end{equation}

preserve the moduli of $\mathcal{C}$ and positions of poles.
\end{propos}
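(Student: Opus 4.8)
The plan is to read $\mathcal{M}_{SL(2)}[\mathbf{k}]$ as the fibre of the forgetful map $p:\mathcal{Q}_{g,m}[\mathbf{k}]\to\mathcal{M}_{g,m}$, $(\mathcal{C},Q)\mapsto(\mathcal{C},\{z_j\})$, so that a vector field preserves the complex structure of $\mathcal{C}$ together with the positions of the poles precisely when it is tangent to this fibre (equivalently $p_*V=0$). By (\ref{105}) the fibre has dimension $3g-3+2\sum_j k_j$, which is exactly the number of fields listed in (\ref{106})--(\ref{108}); I therefore expect these fields to span the fibre tangent space, and the proposition to amount to checking tangency. The cleanest route is to exhibit each $V$ as the image, under the inclusion of the fibre into $\mathcal{Q}_{g,m}[\mathbf{k}]$, of one of the coordinate fields $\partial/\partial A_\gamma$, $\partial/\partial C^1_j$, $\partial/\partial C^l_j$ of $\mathcal{M}_{SL(2)}[\mathbf{k}]$.

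Concretely, I would express these coordinate fields in the period coordinates $(A_k,B_k,2\pi r_j)$ of (\ref{22}). Since $v$ is tautological, the variations (\ref{82})--(\ref{84}) give $\partial v/\partial A_\gamma=u^-_\gamma$, $\partial v/\partial C^1_j=\eta^-_j$, $\partial v/\partial C^l_j=w^{l-}_j$, and the three families of components follow by pairing these with the relevant cycles. The $A$-components are the $a^-$-periods: normalization of $u^-_\gamma$ gives $\delta_{k\gamma}$, while (\ref{74})--(\ref{75}) make the $a^-$-periods of $w^{l-}_j$ and $\eta^-_j$ vanish. The $2\pi r_j$-components are the residues, which vanish for $u^-_\gamma$ and $w^{l-}_j$ but are nonzero for $\eta^-_j$ (residue $\pm1$) — this is why only $V_{C^1_j}$ retains a $\partial/\partial(2\pi r_j)$ term. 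The $B$-components are the $b^-$-periods, furnished verbatim by Lemma 1, eq. (\ref{78}): $\oint_{b^-_i}u^-_\gamma=\Pi_{\gamma i}$, $\oint_{b^-_i}\eta^-_j=\pi i\int^{z^{(1)}_j}_{z^{(2)}_j}u^-_i$, and $\oint_{b^-_i}w^{l-}_j=2\pi i\,u^{(l-2)-}_i(z^{(1)}_j)/(l-1)!$. Assembling the three components reproduces (\ref{106})--(\ref{108}); each $V$ is then tangent to the fibre, and preservation of the moduli of $\mathcal{C}$ and of the pole positions follows because these are constant along the fibre.

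As an internal check on the base-curve part I would also verify $V(\Omega_{\alpha\beta})=0$ directly from (\ref{39})--(\ref{42}). Everything there is governed by the differential $\omega:=u^+_\alpha u^+_\beta/v$, which I claim is a holomorphic Prym differential: each $u^+$, being the pullback of a holomorphic differential through $\xi=\hat\xi^2$, acquires a simple zero at every branch point, so $u^+_\alpha u^+_\beta$ has a double zero there cancelling the double zero of $v$, while at a pole $v$ forces a zero of order $k_j$. Writing $\omega=\sum_k c_k u^-_k$ with $c_i=\oint_{a^-_i}\omega=\partial\Omega_{\alpha\beta}/\partial B_i$, the field $V_{A_\gamma}$ gives $-\sum_k c_k\Pi_{k\gamma}+\sum_i c_i\Pi_{\gamma i}$, which vanishes by the symmetry of the Prym matrix; the field $V_{C^l_j}$ gives a multiple of $\sum_i c_i u^{(l-2)-}_i(z^{(1)}_j)=\partial_{\chi_j}^{\,l-2}(\omega/d\chi_j)\big|_{z^{(1)}_j}$, which vanishes because $\omega$ has a zero of order $k_j\ge l$ at $z^{(1)}_j$.

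I expect the main obstacle to be the residue direction $V_{C^1_j}$: one must carefully reconcile the period coordinate $2\pi r_j=\oint_{t^-_j}v$ of (\ref{22}) with the residue coefficient $C^1_j$ of (\ref{6}), keeping track of the factors of $2\pi i$ hidden in the pairing $\oint_{t^-_j}$ and in the intersection of the cut $\kappa^-_j$ with $t^-_j$; it is precisely these normalizations that make the residue term and the $B$-term of $V_{C^1_j}$ cancel so that $V_{C^1_j}(\Omega)=0$. The second delicate point is the invariance of the pole positions, $V(z_j)=0$: the variational formulas (\ref{39})--(\ref{42}) provide a formula for $\Omega$ but none for the marked points, so rather than manufacture one I would deduce pole invariance from the fibre-tangency argument of the second paragraph, where the positions are held fixed by construction.
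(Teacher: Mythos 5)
Your proposal is correct and follows essentially the same route as the paper: the paper realizes each field as the direction $\tfrac{d}{d\epsilon}\big|_{\epsilon=0}$ of a perturbation $Q^{\epsilon}=Q+\epsilon\tilde{Q}$ with $\tilde{Q}=2vu^-_{\gamma}$, $2v\eta^-_j$ or $2vw^{l-}_j$ (which is exactly your pushforward of $\partial/\partial A_{\gamma}$, $\partial/\partial C^1_j$, $\partial/\partial C^l_j$ via (\ref{82})--(\ref{84})), and then reads off the $a^-$-, $b^-$- and $t^-$-components using the normalizations and Lemma 1, eq. (\ref{78}), with base-preservation being automatic because the perturbed differential lives on the same curve with the same poles. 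Your direct verification that $V(\Omega_{\alpha\beta})=0$ is an extra consistency check not in the paper's proof but consonant with its Remark 5.
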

\begin{proof}
   Consider a perturbation of the original quadratic differential $Q^{\epsilon}=Q+\epsilon\tilde{Q},$ where $\tilde{Q}$ is an arbitrary quadratic differential on $\mathcal{C}$ with simple zeroes and  poles at $z_j$ of even order no greater than $2k_j$. The differential $Q^{\epsilon}$ also has simple zeroes for $\epsilon$ small enough, is defined on the same Riemann surface and has the same set of poles as $Q.$ Thus, the vector field $\frac{d}{d \epsilon}$ does not change the complex structure of $\mathcal{C}$ and positions of poles. Expressing this vector field via the coordinates (\ref{22}) one has:

\begin{equation}
\frac{d}{d \epsilon}=\sum^m_{j=1}\Bigg(\oint_{t^-_j}\frac{\tilde{Q}}{2v}\Bigg)\frac{\partial} {\partial (2\pi i r_j)} +\sum^{g^-}_{i=1}\Bigg[\Bigg(\oint_{a^-_i}\frac{\tilde{Q}}{2v}\Bigg)\frac{\partial} {\partial A_i}+\Bigg(\oint_{b^-_i}\frac{\tilde{Q}}{2v}\Bigg)\frac{\partial} {\partial B_i} \Bigg].
\end{equation}
Then taking $\tilde{Q}$ to be equal either to $2v u_{\gamma}, 2v {\eta}^{-}_j$ or $2v{w}^{l-}_j$ and applying formulas (\ref{78}) we obtain the result.
Notice that (\ref{106}-\ref{108}) are exactly the expressions that appear in the proof of Theorem 5 when performing the chain rule. The fields are independent since on the submanifold $\mathcal{M}_{SL(2)}[\mathbf{k}]$ they are equal to the derivatives over independent coordinates.

\end{proof}
\begin{remark}
In the holomorphic case, when $m=0,$ one has the isomorphism between holomorphic quadratic differentials $\mathcal{Q}_{g}$ and the cotangent bundle of the moduli space of curves $T^* \mathcal{M}_g.$ It follows from the proposition that $V_{A_\gamma}, \gamma=1,...,3g-3$ act trivially on the coordinates $\{q_i\}^{3g-3}_{i=1}$ of the canonical Darboux coordinate set $\{p_i, q_i\}^{3g-3}_{i=1}$ on $T^* \mathcal{M}_g$. Thus, $V_{A_\gamma}$ span the vertical bundle $V_*(T^* \mathcal{M}).$ 
\end{remark}

\subsection{Variations of Prym differentials}
The derivatives of differentials depending on the point (points) on $\hat{\mathcal{C}}$ should be treated in a greater accuracy, since the latter is deforming. Similarly to (\ref{68}) we define on $\mathcal{M}_{SL(2)}[\mathbf{k}]$ derivatives for any coordinate $p_i$ in the list (\ref{9}) as follows
\begin{equation}\label{110}\frac{\partial  B^-(x,y)}{\partial p_i}=\frac{\partial}{\partial p_i}\Bigg(\frac{B^-(x,y)}{d \xi(x) d \xi(y)}\Bigg)d \xi(x) d \xi(y), \end{equation}
where $\xi$ is any fixed coordinate on $\mathcal{C}$ lifted to the covering surface via $\pi^{-1}.$ Notice that on the space $\mathcal{Q}_{g,m}[\mathbf{k}]$ the differentiation is performed according to the rule (\ref{42}), when the flat coordinates $z(x)$ and $z(y)$ are kept fixed. Restricting the variational formulas from $\mathcal{Q}_{g,m}[\mathbf{k}]$ onto its subspace $\mathcal{M}_{SL(2)}[\mathbf{k}]$ we obtain the following:
\begin{theorem}
The variations of the Prym bidifferential $B^-(x,y)$ on the space $\mathcal{M}_{SL(2)}[\mathbf{k}]$ with respect to the coordinates (\ref{9}) take the following form:
\begin{equation}\label{111}\frac{\partial B^-(x,y)}{\partial A_{\gamma}}=-\frac{1}{2} \sum^r_{i=1}\underset{x_i}{res}\Bigg(\frac{{u}^-_{\gamma}(t) B^-(x,t)  B^-(t,y)  }{d{\xi} \: d( v/d{\xi})}\Bigg),\end{equation} 

\begin{equation}\label{112}\frac{\partial B^-(x,y)}{\partial C^{1}_j}=-\frac{1}{2} \sum^r_{i=1}\underset{x_i}{res}\Bigg(\frac{{\eta}^{-}_j(t) B^-(x,t)  B^-(t,y)  }{d{\xi} \: d( v/d{\xi})}\Bigg),\end{equation} 

\begin{equation}\label{113}\frac{\partial B^-(x,y)}{\partial  C^{l}_j}=-\frac{1}{2} \sum^r_{i=1}\underset{x_i}{res}\Bigg(\frac{{w}^{l-}_j(t) B^-(x,t)  B^-(t,y)  }{d{\xi} \: d( v/d{\xi})}\Bigg),\end{equation}
$\xi$ denotes a local coordinate on $\mathcal{C}$ near a branch point $x_i.$ The above formulas do not depend on the choice of $\xi.$ 
\end{theorem}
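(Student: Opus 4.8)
The plan is to mirror the derivation of the Prym-matrix formula (\ref{87}), with the period $\Pi_{\alpha\beta}$ replaced by the bidifferential $B^-(x,y)$ and the variational formula (\ref{40}) replaced by (\ref{42}). Because the base curve is fixed, the directions tangent to $\mathcal{M}_{SL(2)}[\mathbf{k}]$ inside $\mathcal{Q}_{g,m}[\mathbf{k}]$ are exactly the vector fields (\ref{106})--(\ref{108}). First I would differentiate $B^-(x,y)$ along $V_{A_\gamma}$ by the chain rule, applying (\ref{42}) to $\partial_{A_\gamma}$ (dual cycle $s^*=-2b^-_\gamma$) and to each $\partial_{B_i}$ (dual cycle $s^*=2a^-_i$), and inserting $\partial B_i/\partial A_\gamma=\Pi_{\gamma i}=\oint_{b^-_i}u^-_\gamma$ together with $\oint_{a^-_i}u^-_\gamma=\delta_{i\gamma}$. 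Exactly as in the passage from (\ref{90}) to (\ref{91}), the two pieces assemble into the antisymmetric period combination
\[
\frac{1}{2\pi i}\sum_{i=1}^{g^-}\Bigl(\oint_{a^-_i}\tfrac{B^-(x,t)B^-(t,y)}{v(t)}\oint_{b^-_i}u^-_\gamma-\oint_{b^-_i}\tfrac{B^-(x,t)B^-(t,y)}{v(t)}\oint_{a^-_i}u^-_\gamma\Bigr),
\]
to which the Riemann bilinear identity of Lemma 1 applies verbatim, yielding the prefactor $-\tfrac12$ anticipated in (\ref{111}).

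Running the RBI as in Lemma 1 --- enlarging the sum by the vanishing $H_+$-periods of the skew-symmetric integrand and collecting residues inside $\hat{\mathcal{C}}_0$ --- the crux emerges: as a $1$-form in $t$ the kernel $B^-(x,t)B^-(t,y)/v(t)$ has poles not only at the zeroes $x_i$ of $v$. At the branch points the local expansions (\ref{94})--(\ref{98}) reduce the residue to the invariant form $\underset{x_i}{res}\bigl(u^-_\gamma(t)B^-(x,t)B^-(t,y)/(d\xi\,d(v/d\xi))\bigr)$, exactly as for (\ref{87}). But $B^-$ also contributes double poles on the diagonal at $t=x,x^\mu$ and $t=y,y^\mu$, so the RBI produces extra residues there.

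The main obstacle is to show that these diagonal residues are precisely what turns the flat-coordinate derivative of (\ref{42}) into the $\xi$-fixed derivative required by (\ref{110}). Here the distinction stressed before the theorem is decisive: (\ref{42}) keeps $z(x),z(y)$ fixed, whereas (\ref{110}) keeps the moduli-independent $\xi$ fixed. Along $V_{A_\gamma}$ one has, by (\ref{82}), $\partial z(x)/\partial A_\gamma|_\xi=\int^x u^-_\gamma$ (the lower-limit term drops since $v$ vanishes at the branch point $x_1$) and $\partial(v/d\xi)/\partial A_\gamma|_\xi=u^-_\gamma/d\xi$, so the change of convention adds exactly $\partial_{z(x)}\bigl(B^-/(dz_x\,dz_y)\bigr)\int^x u^-_\gamma+\bigl(B^-/v(x)\bigr)u^-_\gamma(x)$ and its $x\leftrightarrow y$ mirror. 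A local computation of $\underset{t=x}{res}\bigl(\tfrac{B^-(x,t)B^-(t,y)}{v(t)}\int^t_{x_1}u^-_\gamma\bigr)$ reproduces exactly these terms; moreover, taking base point $x_1$ (fixed by $\mu$) makes the $1$-form $\tfrac{B^-(x,t)B^-(t,y)}{v(t)}\int^t_{x_1}u^-_\gamma$ invariant under $\mu$, so that $\underset{t=x^\mu}{res}=\underset{t=x}{res}$ as in (\ref{80}); the factor $\tfrac12$ then matches the pair $\{x,x^\mu\}$ against the single correction term. Hence the diagonal residues cancel against the $z\to\xi$ correction and only the branch-point residues survive, which is (\ref{111}).

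Finally, (\ref{112}) and (\ref{113}) follow by the identical scheme along $V_{C^1_j}$ and $V_{C^l_j}$, using (\ref{84}) and (\ref{83}) so that the primitive $\int u^-_\gamma$ in the RBI becomes $\int\eta^-_j$ and $\int w^{l-}_j$. For (\ref{113}) the differential $w^{l-}_j$ has vanishing residues at $z^{(1)}_j,z^{(2)}_j$, so no contribution arises beyond the branch points, just as for (\ref{89}). For (\ref{112}) the third-kind $\eta^-_j$ adds simple-pole residues at $z^{(1)}_j,z^{(2)}_j$, which are absorbed by the explicit $\partial/\partial(2\pi r_j)$ term of $V_{C^1_j}$ via the identity (\ref{100}), reproducing the cancellation already seen for (\ref{88}). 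Independence of the choice of $\xi$ is manifest, since $d\xi\,d(v/d\xi)$ is the invariant combination singled out in (\ref{87})--(\ref{89}).
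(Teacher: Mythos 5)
Your proposal is correct and follows essentially the same route as the paper: chain rule plus the Riemann bilinear identity produce residues of $B^-(x,t)B^-(t,y)v^{-1}(t)\int^t_{x_1}(\cdot)$ at the branch points and at the diagonal points $t=x,x^{\mu},y,y^{\mu}$, and the diagonal residues (doubled by the involution symmetry against the overall $\tfrac12$) cancel precisely the terms converting the $z$-fixed derivative of (\ref{42}) into the $\xi$-fixed derivative of (\ref{110}), leaving the branch-point residues in the invariant form. The treatment of (\ref{112}) and (\ref{113}) via the extra residues of $\eta^-_j$ and the vanishing residues of $w^{l-}_j$ also matches the paper's argument.
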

\begin{proof}
Let us prove (\ref{111}). Denote by $b^-(x,y):=\frac{B^-(x,y)}{v(x)v(y)}.$ Then one has

\begin{equation}\label{114}\frac{\partial B^-(x,y)}{\partial A_{\gamma}}\Big|_{\xi(x),\xi(y)}=\frac{\partial [b^-(x,y)v(x)v(y)]}{\partial A_{\gamma}}\Big|_{\xi(x),\xi(y)}=\end{equation}
\begin{equation}\label{115}=\frac{\partial b^-(x,y)}{\partial A_{\gamma}}\Big|_{\xi(x),\xi(y)}v(x)v(y)+b^-(x,y)v(y)\frac{\partial v(x)}{\partial A_{\gamma}}\Big|_{\xi(x)}+b^-(x,y)v(x)\frac{\partial v(y)}{\partial A_{\gamma}}\Big|_{\xi(y)} =\end{equation}
\begin{equation}\label{116}\stackrel{(\ref{82})}{=}\frac{\partial b^-(x,y)}{\partial A_{\gamma}}\Big|_{z(x),z(y)}v(x)v(y)+\frac{\partial b^-(x,y)}{\partial z(x)}\frac{\partial z(x)}{\partial A_{\gamma}}\Big|_{\xi(x),\xi(y)}v(x)v(y)+\frac{\partial b^-(x,y)}{\partial z(y)}\frac{\partial z(y)}{\partial A_{\gamma}}\Big|_{\xi(x),\xi(y)}v(x)v(y)+\end{equation} $$+b^-(x,y)v(y)u^-_{\gamma}(x)+b^-(x,y)v(x)u^-_{\gamma}(y)= $$
\begin{equation} \label{eqv}
\stackrel{(\ref{82})}{=}\frac{\partial b^-(x,y)}{\partial A_{\gamma}}\Big|_{z(x),z(y)}dz(x)dz(y)+ \Bigg[\Big(b^-(x,y)\Big)'_{z(x)}\int^x_{x_1} u^-_{\gamma} +\Big(b^-(x,y)\Big)'_{y(x)}\int^y_{x_1} u^-_{\gamma} \Bigg] dz(x)dz(y)+    
\end{equation}
$$+b^-(x,y)u^-_{\gamma}(x)dz(y)+b^-(x,y)u^-_{\gamma}(y)dz(x).$$

To compute the term $\frac{\partial b^-(x,y)}{\partial A_{\gamma}}\Big|_{z(x),z(y)}$ we, similarly to (\ref{90}), apply the chain rule and then variational formulas (\ref{42}), (\ref{82}) along with the RBI to obtain

\begin{equation}\label{118}\frac{\partial b^-(x,y)}{\partial A_{\gamma}}\Big|_{z(x),z(y)}= -\sum_{t \in int(\hat{C})}\frac{1}{2}\underset{t}{res}\Bigg(  {b}^-(x,t)  {b}^-(t,y)  v(t)\int^{t}_{p_0}u^-_{\gamma} \Bigg). \end{equation}
To evaluate the residues, introduce the differentials $V(t)= {b}^-(x,t)  {b}^-(t,y)  v(t)$ and $W(t)=u^-_{\gamma}(t)$.
$W(t)$ is holomorphic, while  $V(t),$ in addition to second order poles at $\{x_i \}^{r}_{i=1}$ has four second order poles at $t=x, \ x^\mu$ and at $t=y, \ y^\mu.$ As in the proof of Lemma 1, we can put  $p_0=x_1.$ Then residues at the pairs $(x, x^\mu)$ and $(y, y^\mu)$ are the same due to $V(t), W(t)$ being skew-symmetric under the involution and, thus, their contribution to the sum double. Using the expansion of the Prym differential (\ref{33}) the residue at $t=x$ equals
\begin{equation}\label{119}\Big( {b}^-(x,y)\int^{x}_{x_1}u^-_{\gamma} \Big)'_{z(x)}. \end{equation}
The residue at $t=y$ is
\begin{equation}\label{120}\Big( {b}^-(x,y)\int^{y}_{x_1}u^-_{\gamma} \Big)'_{z(y)}. \end{equation}
Then it it easy to see that these terms, multiplied by $dz(x)dz(y),$ cancel the last four terms of the sum (\ref{eqv}). The evaluation of residues at $x_i$, similarly to the proof of (\ref{87}), leads to the result. (\ref{112}) and (\ref{113}) are obtained by analogy.
\end{proof}
Integrating above formulas over the cycles $b^-$ an using the simple fact that
\begin{equation}\label{121}\oint_{b^{-}_{\alpha}} B^-(\cdot,y)= 2 \pi i\  u^{-}_{\alpha} (y) \end{equation} we derive variational formulas for Prym normalized differentials:
\begin{theorem}
The variations of normalized Prym differentials $u^-_{\alpha}$ on the space $\mathcal{M}_{SL(2)}[\mathbf{k}]$ with respect to the coordinates (\ref{9}) take the following form:
\begin{equation}\label{122}\frac{\partial u^-_{\alpha}(x)}{\partial A_{\gamma}}=-\frac{1}{2} \sum^r_{i=1}\underset{x_i}{res}\Bigg(\frac{{u}^-_{\gamma}(t) B^-(x,t)  u^-_{\alpha}(t)  }{d{\xi} \: d( v/d{\xi})}\Bigg),\end{equation} 

\begin{equation}\label{123}\frac{\partial u^-_{\alpha}(x)}{\partial C^{1}_j}=-\frac{1}{2} \sum^r_{i=1}\underset{x_i}{res}\Bigg(\frac{\eta^-_j(t) B^-(x,t)  u^-_{\alpha}(t)  }{d{\xi} \: d( v/d{\xi})}\Bigg),\end{equation} 

\begin{equation}\label{124}\frac{\partial u^-_{\alpha}(x)}{\partial  C^{l}_j}=-\frac{1}{2} \sum^r_{i=1}\underset{x_i}{res}\Bigg(\frac{{w}^{l-}_j(t) B^-(x,t)  u^-_{\alpha}(t)  }{d{\xi} \: d( v/d{\xi})}\Bigg),\end{equation}
$\xi$ denotes a local coordinate on $\mathcal{C}$ near a branch point $x_i.$ The above formulas do not depend on the choice of $\xi.$ 
\end{theorem}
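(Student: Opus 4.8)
The plan is to obtain Theorem 7 as an immediate consequence of Theorem 6, exploiting the fact that each normalized Prym differential is recovered from the Prym bidifferential by a single $b^-$-period, namely (\ref{121}): $u^-_\alpha(y)=\frac{1}{2\pi i}\oint_{b^-_\alpha}B^-(\cdot,y)$, where the contour integral is taken in the first argument. First I would fix, once and for all, a representative of the homology class $b^-_\alpha$ that avoids all branch points $x_i$; since the base curve and the positions of the poles are held fixed along $\mathcal{M}_{SL(2)}[\mathbf{k}]$, the topology is unchanged and this representative can be chosen independently of the moduli. Differentiating (\ref{121}) with respect to $A_\gamma$ in the sense of (\ref{110}), with $\xi(y)$ kept fixed, then commutes with the fixed contour integral and yields $\frac{\partial u^-_\alpha(y)}{\partial A_\gamma}=\frac{1}{2\pi i}\oint_{b^-_\alpha}\frac{\partial B^-(x,y)}{\partial A_\gamma}$.

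Next I would substitute the right-hand side of (\ref{111}) for $\partial B^-/\partial A_\gamma$. The only dependence on the integration variable $x$ sits in the factor $B^-(x,t)$ inside the residue, while the residue itself is taken in the variable $t$ on a small circle around each branch point $x_i$. Because the chosen representative of $b^-_\alpha$ stays away from the branch points, the contour in $x$ and the residue circle in $t$ lie in disjoint regions and $B^-(x,t)$ is regular along $b^-_\alpha$; hence the contour integral may be carried inside the residue. This places $\frac{1}{2\pi i}\oint_{b^-_\alpha}B^-(x,t)$ next to the remaining factors, and a second application of (\ref{121}) collapses it to $u^-_\alpha(t)$. After using the symmetry $B^-(t,y)=B^-(y,t)$ and relabelling $y\to x$, the resulting expression is exactly (\ref{122}).

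The main technical point to justify carefully is the interchange of the $b^-_\alpha$-period with the residue at the branch points, together with the commutation of the moduli derivative and the contour integral; both rest on the same observation, that the cycle can be represented by a fixed contour disjoint from a neighbourhood of the $x_i$, so that all singularities in $x$ and in $t$ remain separated and differentiation under the integral sign is legitimate. One should also check that the two notions of derivative match: the moduli derivative of $B^-$ in Theorem 6 is taken with $\xi(x),\xi(y)$ fixed, which is precisely what is needed when $\xi(x)$ serves as the coordinate along $b^-_\alpha$ and $\xi(y)$ is held fixed, so no extra flat-coordinate correction terms of the type appearing in the proof of Theorem 6 arise here. Formulas (\ref{123}) and (\ref{124}) then follow verbatim by starting instead from (\ref{112}) and (\ref{113}); the only change is that the holomorphic factor $u^-_\gamma(t)$ is replaced by $\eta^-_j(t)$ or $w^{l-}_j(t)$, neither of which affects the reproducing-kernel argument.
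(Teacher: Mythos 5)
Your proposal is correct and follows exactly the paper's route: the paper also obtains Theorem 7 by integrating the formulas of Theorem 6 over a $b^-_\alpha$-cycle and invoking the reproducing identity (\ref{121}) twice. The extra care you take in justifying the interchange of the period integral with the residues and with the moduli derivative is sound and only makes explicit what the paper leaves implicit.
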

\begin{remark}
By analogy with Remark 5 one can show that the variations of $B^+(x,y)$ and 
$u^+_{\alpha}$ on $\mathcal{M}_{SL(2)}[\mathbf{k}]$ are zero. There is no surprise since these objects are the pullbacks from the base curve $\mathcal{C}$ which is assumed not to depend on moduli.
\end{remark}

\subsection{Variations of tau functions}
Initially Hodge and Prym tau-functions defined by (\ref{53}) and (\ref{58}) solve the system of differential equations on the space $\mathcal{Q}_{g,m}[\mathbf{k}]$ with a variable base. Having the base curve $\mathcal{C}$ fixed on $\mathcal{M}_{SL(2)}[\mathbf{k}]$ we define tau-functions on this subspace by a natural restriction of $\tau^{\pm}$ from $\mathcal{Q}_{g,m}[\mathbf{k}]$ assuming that the Period matrix $\Omega$ is constant.

Let $\{ \tilde{C}^1_j \}^n_{j=1}$ denote the local coordinates on $\mathcal{M}_{SL(2)}[\mathbf{k}]$ from the list (\ref{9}) corresponding to the residues near simple poles $\{\tilde{z}^{(1)}_k,\tilde{z}^{(2)}_k\}^n_{j=1}$ of the differential $v,$ whereas coordinates $\{ C^1_j \}^m_{j=n+1}$ are residues near the higher order poles.

\begin{theorem}
Prym tau-function $\tau^-$ satisfies the following system of differential equations on the space $\mathcal{M}_{SL(2)}[\mathbf{k}]$

\begin{equation}\label{125} \frac{\partial \log(\tau^-)}{\partial A_\gamma}= \frac{1}{2} \sum^r_{i=1}\underset{x_i}{res}\Bigg(\frac{{u}^-_{\gamma} \hat{B}_{reg}  }{d{\xi} \: d( v/d{\xi})}\Bigg)+\frac{11}{432}\sum^r_{i=1}\underset{x_i}{res}\Bigg(\frac{u^-_{\gamma}}{\int^x_{x_i}v} \Bigg)-\sum^n_{k=1}\frac{1}{48\tilde{r}_k}\int^{\tilde{z}^{(1)}_k}_{\tilde{z}^{(2)}_k}u^-_{\gamma},\end{equation} 

\begin{equation}\label{126}  \frac{\partial \log(\tau^-)}{\partial \tilde{C}^{1}_j}=\frac{1}{2} \sum^r_{i=1}\underset{x_i}{res}\Bigg(\frac{\eta^-_j \hat{B}_{reg}  }{d{\xi} \: d( v/d{\xi})}\Bigg)+\frac{11}{432}\sum^r_{i=1}\underset{x_i}{res}\Bigg(\frac{\eta^-_j}{\int^x_{x_i}v} \Bigg)-\sum^n_{k=1, k \neq j}\frac{1}{48\tilde{r}_k}\int^{\tilde{z}^{(1)}_k}_{\tilde{z}^{(2)}_k}\eta^-_j-
\end{equation}

$$-\frac{1}{48\tilde{r}_j}\int^{\tilde{z}^{(1)}_j}_{\tilde{z}^{(2)}_j}\Bigg(\frac{v}{\tilde{r}_j}-\eta^-_j \Bigg),$$ 
$\quad j=1,...,n,$
\begin{equation}\label{127}\frac{\partial \log(\tau^-)}{\partial C^{1}_j}=\frac{1}{2} \sum^r_{i=1}\underset{x_i}{res}\Bigg(\frac{\eta^-_j \hat{B}_{reg}  }{d{\xi} \: d( v/d{\xi})}\Bigg)+\frac{11}{432}\sum^r_{i=1}\underset{x_i}{res}\Bigg(\frac{\eta^-_j}{\int^x_{x_i}v} \Bigg)-\sum^n_{k=1}\frac{1}{48\tilde{r}_k}\int^{\tilde{z}^{(1)}_k}_{\tilde{z}^{(2)}_k}\eta^-_j, \end{equation}
$\quad j=n+1,...,m, $

\begin{equation}\label{128}\frac{\partial \log(\tau^-)}{\partial  C^{l}_j}=\frac{1}{2} \sum^r_{i=1}\underset{x_i}{res}\Bigg(\frac{w^{l-}_j \hat{B}_{reg}  }{d{\xi} \: d( v/d{\xi})}\Bigg)+\frac{11}{432}\sum^r_{i=1}\underset{x_i}{res}\Bigg(\frac{w^{l-}_j}{\int^x_{x_i}v} \Bigg)-\sum^n_{k=1}\frac{1}{48\tilde{r}_k}\int^{\tilde{z}^{(1)}_k}_{\tilde{z}^{(2)}_k}w^{l-}_j,\end{equation} 
$\quad j=n+1,...,m, \ \  l=2,...,k_j-1.$

\begin{equation}\label{129}\frac{\partial \log(\tau^-)}{\partial  C^{k_j}_j}=\frac{1}{2} \sum^r_{i=1}\underset{x_i}{res}\Bigg(\frac{w^{k_j-}_j \hat{B}_{reg}  }{d{\xi} \: d( v/d{\xi})}\Bigg)+\frac{11}{432}\sum^r_{i=1}\underset{x_i}{res}\Bigg(\frac{w^{k_j-}_j}{\int^x_{x_i}v} \Bigg)-\sum^n_{k=1}\frac{1}{48\tilde{r}_k}\int^{\tilde{z}^{(1)}_k}_{\tilde{z}^{(2)}_k}w^{k_j-}_j+ \end{equation}

$$
+\frac{1}{(k_j-1)} \frac{2k_j-k^2_j}{24 C^{k_j}_j},$$
$\quad j=n+1,...,m.$
\end{theorem}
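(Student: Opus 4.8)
The plan is to obtain the equations on $\mathcal{M}_{SL(2)}[\mathbf{k}]$ by restricting the differential equations of \textbf{Theorem 4}, valid on the larger space $\mathcal{Q}_{g,m}[\mathbf{k}]$, to the fibres on which the base curve $\mathcal{C}$ is frozen. On $\mathcal{M}_{SL(2)}[\mathbf{k}]$ the periods $\{B_i\}$ are no longer independent, so the coordinate derivatives in (\ref{9}) restrict to the vector fields $V_{A_\gamma},V_{C^1_j},V_{C^l_j}$ of \textbf{Proposition 3} (these are, as noted there, exactly the combinations produced by the chain rule, cf. (\ref{90})). Taking $\partial/\partial A_\gamma=V_{A_\gamma}$ applied to $\log\tau^-$ and inserting (\ref{65}) together with $\Pi_{\gamma i}=\oint_{b^-_i}u^-_\gamma$ and $\delta_{\gamma i}=\oint_{a^-_i}u^-_\gamma$, the derivative becomes
\begin{equation*}\frac{\partial\log\tau^-}{\partial A_\gamma}=\frac{1}{4\pi i}\sum_{i=1}^{g^-}\Bigg[\Big(\oint_{a^-_i}u^-_\gamma\Big)\oint_{b^-_i}\frac{B^-_{reg}}{v}-\Big(\oint_{b^-_i}u^-_\gamma\Big)\oint_{a^-_i}\frac{B^-_{reg}}{v}\Bigg],\end{equation*}
which is precisely the bilinear combination handled by the Riemann bilinear identity of \textbf{Lemma 1}; the would-be $H_+$ contributions vanish because $u^-_\gamma$ has zero $a^+$- and $b^+$-periods. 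Equations (\ref{126})--(\ref{129}) are obtained identically with $u^-_\gamma$ replaced by the Prym differentials $\eta^-_j$ or $w^{l-}_j$ of \textbf{Lemma 1}, using the fields $V_{C^1_j},V_{C^l_j}$ and, for the residue coordinates, the equations (\ref{66})--(\ref{67}); I would carry out (\ref{125}) in detail and indicate the routine modifications for the rest.

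Applying the identity turns the above into $\tfrac14\sum\underset{\cdot}{res}\big(\tfrac{B^-_{reg}}{v}\int^x_{x_1}u^-_\gamma\big)$, a sum of residues over the interior of the fundamental polygon $\hat{\mathcal{C}}_0$. Since $B^-_{reg}/v$ is singular only at the branch points $x_i$ (double zeros of $v$) and at the poles of $v$, the sum splits into these two families, and, exactly as in (\ref{93}), the base point of $\int u^-_\gamma$ may be moved to the local ramification point without changing the residues. The branch-point residues must reproduce both the $\hat B_{reg}$-term and the $\tfrac{11}{432}$-term of (\ref{125}), while the residues at the simple poles $\tilde z_k$ of $v$ must reproduce the $\tfrac{1}{48\tilde r_k}$-terms.

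The decisive step, and the main obstacle, is the branch-point analysis. Here I would use the skew-invariant local parameter $\hat\xi$, $\hat\xi(x^\mu)=-\hat\xi(x)$, together with the expansions (\ref{94})--(\ref{95}) and the identity $d\xi\,d(v/d\xi)=v_0\hat\xi(d\hat\xi)^2$ from the proof of \textbf{Theorem 5}, and the splitting $B^-_{reg}(x,x)=\hat B_{reg}(x,x)-\mu^*_y\hat B(x,y)\big|_{y=x}$ coming from (\ref{31}). A direct expansion shows that near $x_i$ one has $\hat B_{reg}=\big(\tfrac{2}{3}\hat\xi^{-2}+\dots\big)(d\hat\xi)^2$ while $\mu^*_y\hat B|_{y=x}$ carries the pole $-\tfrac14\hat\xi^{-2}(d\hat\xi)^2$, so that $B^-_{reg}$ has the pole coefficient $\tfrac23+\tfrac14=\tfrac{11}{12}$ (the $SL(2)$ counterpart of the value $\tfrac{5}{12}=\tfrac23-\tfrac14$ entering the $\tfrac{5}{432}$ in the $\tau^+$ formula). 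Extracting the $\hat\xi^{-1}$-coefficient of $\tfrac{B^-_{reg}}{v}\int^x u^-_\gamma$ to the required subleading order and matching it, term by term, against $\tfrac12\,\underset{x_i}{res}\big(\tfrac{u^-_\gamma\hat B_{reg}}{d\xi\,d(v/d\xi)}\big)+\tfrac{11}{432}\,\underset{x_i}{res}\big(\tfrac{u^-_\gamma}{\int^x_{x_i}v}\big)$ is the delicate computation: the regular (Bergman projective-connection) parts of $\hat B_{reg}$ and of $\mu^*_y\hat B$, as well as the $v_2/v_0$ corrections from (\ref{95}), all feed into the residue, and it is their precise bookkeeping that isolates the universal constant $\tfrac{11}{432}$.

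It remains to collect the residues at the poles of $v$. Near a simple pole $\tilde z_k$ (a double pole of $Q$) the form $B^-_{reg}/v$ has a simple pole (\textbf{Remark 4}); its residue, paired with $\int u^-_\gamma$ and simplified via (\ref{81}), yields the terms $-\tfrac{1}{48\tilde r_k}\int^{\tilde z^{(1)}_k}_{\tilde z^{(2)}_k}u^-_\gamma$. The additional algebraic terms in (\ref{126}) and (\ref{129}) are not produced by the bilinear identity: they encode the explicit dependence of the \emph{distinguished} coordinates (\ref{46}) and (\ref{47})--(\ref{48}), through which $\tau^-$ is defined in (\ref{58}), on the very moduli being differentiated. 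For $\partial/\partial\tilde C^1_j$ this is where the regularizing summand $\tfrac{1}{12\tilde r_k^2}v$ of (\ref{66}) contributes the extra $-\tfrac{1}{48\tilde r_j}\int^{\tilde z^{(1)}_j}_{\tilde z^{(2)}_j}\big(\tfrac{v}{\tilde r_j}-\eta^-_j\big)$; for the top coefficient $\partial/\partial C^{k_j}_j$ the order-$(k_j-2)$ zero of $B^-_{reg}/v$ at $z^{(1)}_j,z^{(2)}_j$ no longer absorbs the pole of $\int w^{k_j-}_j$, and together with the homogeneity (\ref{62}) this produces the rational term $\tfrac{1}{k_j-1}\tfrac{2k_j-k_j^2}{24\,C^{k_j}_j}$.
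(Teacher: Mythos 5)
Your proposal follows essentially the same route as the paper: restrict the equations of Theorem 4 via the chain rule (equivalently the fields of Proposition 3), convert the bilinear combination into residues by the Riemann bilinear identity with the $1/4$ prefactor, and then do the local analysis at the branch points in the skew-invariant coordinate $\hat\xi$ — your splitting $B^-_{reg}=\hat B_{reg}-\mu^*_y\hat B|_{y=x}$ is the same computation as the paper's $B^-_{reg}=(S^-_B-S_v)/6$ with $S^-_B=\hat S_B-6\mu^*_y\hat B|_{y=x}$, and your pole coefficient $\tfrac23+\tfrac14=\tfrac{11}{12}$ matches the paper's expansion $\tfrac{B^-_{reg}}{v}=\big[\tfrac{11}{12\hat\xi^4}+\tfrac{\hat S_B}{3\hat\xi^2}+O(1)\big]d\hat\xi$, as do your treatments of the simple-pole residues, the regularization for $\tilde C^1_j$, and the extra residue at $l=k_j$ (which the paper evaluates directly as $\tfrac{1}{(k_j-1)!}\big(\tfrac{S_v}{6v}\big)^{(k_j-2)}(z^{(1)}_j)$ rather than via homogeneity). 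The only substantive piece you defer — extracting the subleading $\hat\xi^{-1}$ coefficient to produce the $\hat B_{reg}$-term and the $\tfrac{11}{432}$ constant — is exactly the paper's computation in (\ref{135})--(\ref{136}), so the proposal is correct and essentially identical in method.
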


\begin{proof}
In parallel to (\ref{87}), we apply the chain rule to the equations (\ref{65}) and use (\ref{82}) with the RBI to have
\begin{equation}\label{131}\frac{\partial \log(\tau^-)}{\partial A_{\gamma}}=\frac{1}{4}\sum_{t=\{x_i, \tilde{z}_k^{(1)}, \tilde{z}_k^{(2)} \} }\underset{t}{res} \Bigg(\frac{B^-_{reg}}{v} \int^x_{x_1}u^-_{\gamma}\Bigg). \end{equation}
Notice that in addition to the residues at branch points $(x_i)^r_{i=1}$ we have residues at simple poles of $v$ at $(\tilde{z}_k^{(1)}, \tilde{z}_k^{(2)})^{n}_{k=1}$. To compute the residue near $x_i$ we represent $B^{-}_{reg}$ as the difference of projective connections \cite{Kokotov_2009}:
\begin{equation}\label{132}B^{-}_{reg}=\frac{S^{-}_B-S_v}{6}.\end{equation} While $S^{-}_B$ is the Prym projective connection appearing in (\ref{33}), $S_v$ is Schwarzian projective connection defined by
\begin{equation}\label{133}S_v=\Big(\frac{v'}{v}\Big)' -\frac{1}{2}\Big(\frac{v'}{v}\Big)^2,\end{equation}
where $v'=(v/d\hat{\xi})'$ for any local coordinate $\hat{\xi}$ on $\hat{\mathcal{C}}.$ Recall that by (\ref{34}) we have $S^-_{B}(x)=\hat{S}_{{B}}(x)-6 \mu^*_y \hat{B}(x,y)|_{x=y}.$ In the neighborhood of $x_i$ we chose a local coordinate $\hat{\xi}$ such that $v=\hat{\xi}^2 d \hat{\xi}$. Then near $x_i$ we have
\begin{equation}\label{134}\frac{S_v}{6v}=-\frac{2}{3\hat{\xi}^4} d \hat{\xi}.\end{equation}
Moreover, we can choose $\hat{\xi}$ in such a way that $\hat{\xi}(\mu(x))=-\hat{\xi}(x)$. Therefore, near $x_i$ we also have
\begin{equation}\hat{B}(x, \mu(x)) = \Bigg[ \frac{1 }{(\hat{\xi}(x)-\hat{\xi}(\mu(x)))^2}+\frac{1}{6}\hat{S}_B(\hat{\xi}(x))+O\Big((\hat{\xi}(x)-\hat{\xi}(\mu(x)))^2 \Big) \Bigg] d \hat{\xi}(x) d \hat{\xi}(\mu(x))= \end{equation}
\begin{equation}=\Bigg[ -\frac{1}{4\hat{\xi}^2}-\frac{1}{6}\hat{S}_B(\hat{\xi})+O(\hat{\xi}^2)\Bigg](d\hat{\xi})^2, \end{equation}
so that
\begin{equation}\frac{S^-_B}{6v}=\Bigg[ \frac{1}{4\hat{\xi}^4}+\frac{\hat{S}_B(\hat{\xi})}{3 \hat{\xi}^2}+O(1)\Bigg]d\hat{\xi} \end{equation}
and
\begin{equation}\frac{B^-_{reg}}{v}=\Bigg[\frac{11}{12\xi^4}+\frac{\hat{S}_B(\hat{\xi})}{3 \hat{\xi}^2}+O(1) \Bigg] d\hat{\xi}. \end{equation}
By analogy with the computation of the residue near the second order pole in (\ref{92}) one has
\begin{equation}\label{135}\underset{x_i}{res} \Bigg(\frac{\hat{S}_B(\hat{\xi})d \hat{\xi}}{3\hat{\xi}^2} \int^x_{x_1}u^-_{\gamma}\Bigg)=\frac{1}{3}\underset{x_i}{res}\Bigg(\frac{{u}^-_{\gamma} \hat{S}_B  }{d{\xi} \: d( v/d{\xi})}\Bigg)=2\underset{x_i}{res}\Bigg(\frac{{u}^-_{\gamma} \hat{B}_{reg}  }{d{\xi} \: d( v/d{\xi})}\Bigg). \end{equation}
Also
\begin{equation}\label{136}\underset{x_i}{res} \Bigg(\frac{11d \hat{\xi}}{12\hat{\xi}^4} \int^x_{x_1}u^-_{\gamma}\Bigg)=\frac{11}{12}\frac{1}{3!}\Bigg( \frac{u^-_{\gamma}}{ d\hat{\xi}} \Bigg)''(x_i)=\frac{11}{108}\underset{x_i}{res}\Bigg(\frac{u^-_{\gamma}}{\int^x_{x_i}v} \Bigg). \end{equation}
To compute residues near simple poles $\tilde{z}_k$ we use the local coordinate $\zeta$ (\ref{46}) to write near $\tilde{z}^{(1)}_k:$
\begin{equation}\label{137} 
    \frac{1}{6}\frac{S^-_B-S_v}{v}=\frac{1}{6}\frac{S^-_B(\zeta)-\frac{1}{2\zeta^2}}{\frac{\tilde{r}_k}{\zeta}}d\zeta=\Bigg(-\frac{1}{12 \tilde{r}_k \zeta} +O(1)\Bigg) d\zeta. 
\end{equation}
Thus,
\begin{equation}\label{138}(\underset{z^{(1)}_k}{res}+\underset{z^{(2)}_k}{res} )\Bigg(\frac{B^-_{reg}}{v} \int^x_{x_1}u^-_{\gamma}\Bigg)=-\frac{1}{12\tilde{r}_k}\int^{\tilde{z}^{(1)}_k}_{\tilde{z}^{(2)}_k}u^-_{\gamma}\end{equation}
and the formula (\ref{125}) results.

To obtain (\ref{126}) we apply the chain rule with (\ref{65}), (\ref{66}) and (\ref{84}) to write
\begin{equation}\label{139}\frac{d \Pi_{\alpha \beta}}{d \tilde{C}^1_j}=-\frac{1}{4} \int^{z_j^{(2)}}_{z_j^{(1)}} \Bigg(\frac{B^{-}_{reg}}{v}+\frac{1}{12\tilde{r}^2_j}v \Bigg)+\frac{1}{8 \pi i}\sum^{g^-}_{i=1} \Bigg( \oint_{2b^-_i} \frac{B^{-}_{reg}}{v} \oint_{a^-_i} {\eta}^-_j - \oint_{2a^-_i} \frac{B^{-}_{reg}}{v} \oint_{b^-_i} {\eta}^-_j \Bigg).\end{equation}
Notice that in this case both differentials $({B^-_{reg}}/{v})$ and $\eta^-_j$ have simple poles at $(\tilde{z}_j^{(1)}, \tilde{z}_j^{(2)}).$ From (\ref{137}) it follows that in order to regularize $({B^-_{reg}}/{v})$ near these points we need to add $\frac{1}{12 \tilde{r}_j} \eta^-_j$. Then the sum could be rewritten as
\begin{equation}\label{140} \frac{1}{8 \pi i} \sum^{g^-}_{i=1} \Bigg[ \oint_{2b^-_i} \Bigg( \frac{B^{-}_{reg}}{v} +\frac{1}{12 \tilde{r}_j} \eta^-_j \Bigg) \oint_{a^-_i} {\eta}^-_j - \oint_{2a^-_i} \Bigg( \frac{B^{-}_{reg}}{v} +\frac{1}{12 \tilde{r}_j} \eta^-_j \Bigg) \oint_{b^-_i} {\eta}^-_j \Bigg], \end{equation}
which is due to the RBI equals

\begin{equation}\label{141}-\frac{1}{4}\sum_{t=\{x_i, \tilde{z}_k^{(1)}, \tilde{z}_k^{(2)} \} }\underset{t}{res} \Bigg[ \eta^-_j \int^x_{x_1}\Bigg( \frac{B^{-}_{reg}}{v} +\frac{1}{12 \tilde{r}_j} \eta^-_j \Bigg) \Bigg]\end{equation}
and the evaluation of residues provides the formula (\ref{126}).

For $\frac{\partial \log(\tau^-)}{\partial  C^{1}_j}, \ j=n+1,...,m$ we have, using (\ref{65}), (\ref{67}) and (\ref{84}),

\begin{equation}\label{142}\frac{d \Pi_{\alpha \beta}}{d C^1_j}=-\frac{1}{4} \int^{z_j^{(2)}}_{z_j^{(1)}} \Bigg(\frac{B^{-}_{reg}}{v} \Bigg)+\frac{1}{8 \pi i}\sum^{g^-}_{i=1} \Bigg( \oint_{2b^-_i} \frac{B^{-}_{reg}}{v} \oint_{a^-_i} {\eta}^-_j - \oint_{2a^-_i} \frac{B^{-}_{reg}}{v} \oint_{b^-_i} {\eta}^-_j \Bigg).\end{equation}
Here differential $\frac{B^-_{reg}}{v}$ is holomorphic at $z^{(1)}_j$ and $z^{(2)}_j$ where it gains a zero of order $k_j-2.$ Then no regularization needed and applying the RBI we obtain (\ref{127}).

Similarly to (\ref{125}) one derives (\ref{128}) and (\ref{129}) for $\frac{\partial\log(\tau^-)}{\partial  C^{l}_j}, \ l \geq 2.$ For $l=k_j$ extra term appears due to nontrivial coinciding residues near poles $z^{(1)}_j$ and $z^{(1)}_j$ of $w^{k_j-}_j$:
\begin{equation}\frac{1}{(k_j-1)!}\Bigg(\frac{S_v}{6v}\Bigg)^{(k_j-2)}\big(z^{(1)}_j \big),\end{equation}
where the derivative is taken in a local coordinate $\chi_j.$
Using the expansion $(\ref{6})$ of $v$ and the formula for $S_v$ from (\ref{133}) one derives
\begin{equation}\frac{1}{(k_j-1)!}\Bigg(\frac{S_v}{6v}\Bigg)^{(k_j-2)}\big(z^{(1)}_j \big)= \frac{1}{(k_j-1)} \frac{2k_j-k^2_j}{12 C^{k_j}_j},\end{equation}
which finalize the computation.
\end{proof}

\begin{theorem}
Hodge tau-function $\tau^+$ satisfies the following system of differential equations on the space $\mathcal{M}_{SL(2)}[\mathbf{k}]$

\begin{equation}\label{143} \frac{\partial \log(\tau^+)}{\partial A_\gamma}= \frac{5}{432}\sum^r_{i=1}\underset{x_i}{res}\Bigg(\frac{u^-_{\gamma}}{\int^x_{x_i}v} \Bigg)-\sum^n_{k=1}\frac{1}{48\tilde{r}_k}\int^{\tilde{z}^{(1)}_k}_{\tilde{z}^{(2)}_k}u^-_{\gamma},\end{equation} 

\begin{equation}\label{144}  \frac{\partial \log(\tau^+)}{\partial \tilde{C}^{1}_j}=\frac{5}{432}\sum^r_{i=1}\underset{x_i}{res}\Bigg(\frac{\eta^-_j}{\int^x_{x_i}v} \Bigg)-\sum^n_{k=1, k \neq j}\frac{1}{48\tilde{r}_k}\int^{\tilde{z}^{(1)}_k}_{\tilde{z}^{(2)}_k}\eta^-_j-\frac{1}{48\tilde{r}_j}\int^{\tilde{z}^{(1)}_j}_{\tilde{z}^{(2)}_j}\Bigg(\frac{v}{\tilde{r}_j}-\eta^-_j \Bigg), \end{equation} 
$\quad j=1,...,n,$
\begin{equation}\label{145}\frac{\partial \log(\tau^+)}{\partial C^{1}_j}=\frac{5}{432}\sum^r_{i=1}\underset{x_i}{res}\Bigg(\frac{\eta^-_j}{\int^x_{x_i}v} \Bigg)-\sum^n_{k=1}\frac{1}{48\tilde{r}_k}\int^{\tilde{z}^{(1)}_k}_{\tilde{z}^{(2)}_k}\eta^-_j, \end{equation}
$\quad j=n+1,...,m, $

\begin{equation}\label{146}\frac{\partial \log(\tau^+)}{\partial  C^{l}_j}=\frac{5}{432}\sum^r_{i=1}\underset{x_i}{res}\Bigg(\frac{w^{l-}_j}{\int^x_{x_i}v} \Bigg)-\sum^n_{k=1}\frac{1}{48\tilde{r}_k}\int^{\tilde{z}^{(1)}_k}_{\tilde{z}^{(2)}_k}w^{l-}_j,\end{equation} 
$\quad j=n+1,...,m, \ \  l=2,...,k_j-1.$

\begin{equation}\label{147}\frac{\partial \log(\tau^+)}{\partial  C^{k_j}_j}=\frac{5}{432}\sum^r_{i=1}\underset{x_i}{res}\Bigg(\frac{w^{k_j-}_j}{\int^x_{x_i}v} \Bigg)-\sum^n_{k=1}\frac{1}{48\tilde{r}_k}\int^{\tilde{z}^{(1)}_k}_{\tilde{z}^{(2)}_k}w^{k_j-}_j+\frac{1}{(k_j-1)} \frac{2k_j-k^2_j}{24 C^{k_j}_j},\end{equation} 
$\quad j=n+1,...,m.\ \ $

\begin{proof}
The calculation could be performed in like manner to the previous theorem. The only difference is that the term $\frac{{B^{+}_{reg}}}{v}$ appearing in the variational formulas (\ref{65}-\ref{67}) for $\log \tau^+$ expands in the local coordinate $\hat{\xi}$ near $x_i$ as
\begin{equation}\frac{{B^{+}_{reg}}}{v}=\frac{S^{+}_B-S_v}{6v}=\Bigg[\frac{5}{12\xi^4}+O(1) \Bigg] d\hat{\xi},\end{equation}
which leads to the appearance of residues over branch points in the above formulas. All remaining computations of residues near $(\tilde{z}_j^{(1)}, \tilde{z}_j^{(2)})$ are similar to Theorem 8.

\end{proof}
\end{theorem}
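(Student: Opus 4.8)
The plan is to derive the system (\ref{143})--(\ref{147}) by the same restriction procedure that produced Theorem 8 for $\tau^-$, the only genuinely new ingredient being the local behaviour of $B^+_{reg}/v$ at the branch points. First I would restrict the $\mathcal{Q}_{g,m}[\mathbf{k}]$-equations (\ref{65})--(\ref{67}) for $\log\tau^+$ to the subspace $\mathcal{M}_{SL(2)}[\mathbf{k}]$, on which $\Omega$ is frozen and the $B_i$ are functions of the remaining coordinates; applying the chain rule together with (\ref{82}) and the Riemann bilinear identity, exactly as in the passage from (\ref{65}) to (\ref{131}), turns $\partial\log\tau^+/\partial A_\gamma$ into $\frac14\sum_{t}\underset{t}{res}\big((B^+_{reg}/v)\int^x_{x_1}u^-_\gamma\big)$, the sum running over the branch points $x_i$ and the simple poles $\tilde z^{(1)}_k,\tilde z^{(2)}_k$ of $v$.

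The decisive step is the expansion of $B^+_{reg}/v$ near a branch point. Choosing the local coordinate $\hat\xi$ with $v=\hat\xi^2 d\hat\xi$ and $\hat\xi(\mu(x))=-\hat\xi(x)$, I would write $B^+_{reg}=(S^+_B-S_v)/6$ and use $S^+_B=\hat S_B+6\mu^*_y\hat B(x,y)|_{y=x}$ from (\ref{34}). The antidiagonal value $\hat B(x,\mu(x))$ has the same expansion $\big[-1/(4\hat\xi^2)-\frac16\hat S_B+O(\hat\xi^2)\big](d\hat\xi)^2$ used in Theorem 8; substituting it shows that in $S^+_B$ the Bergman projective connection $\hat S_B$ cancels, in sharp contrast to $S^-_B$. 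Combined with $S_v/(6v)=-\frac{2}{3\hat\xi^4}d\hat\xi$ this gives $B^+_{reg}/v=\big[\frac{5}{12\hat\xi^4}+O(1)\big]d\hat\xi$, with no $\hat\xi^{-2}$ term. This is precisely why (\ref{143}) carries no $\hat B_{reg}$-residue (unlike (\ref{125})), and why the branch-point contribution evaluates to $\frac14\cdot\frac{5}{108}=\frac{5}{432}$ once the third Taylor coefficient of $u^-_\gamma/d\hat\xi$ is rewritten as $\underset{x_i}{res}(u^-_\gamma/\int^x_{x_i}v)$ exactly as in (\ref{136}), with $11$ replaced by $5$.

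The contributions at the simple poles $\tilde z_k$ are unchanged from Theorem 8: near $\tilde z^{(1)}_k$ the image $\mu(x)$ lies near $\tilde z^{(2)}_k$, so $\mu^*_y\hat B|_{y=x}$ is holomorphic there, and dividing it by the simple pole of $v$ yields a quantity that vanishes at $\tilde z_k$. Hence $B^+_{reg}/v$ and $B^-_{reg}/v$ have identical residues against $\int^x_{x_1}u^-_\gamma$, reproducing the terms $-\frac{1}{48\tilde r_k}\int^{\tilde z^{(1)}_k}_{\tilde z^{(2)}_k}u^-_\gamma$ and completing (\ref{143}). For the remaining coordinates I would repeat the chain-rule computation of Theorem 8 with $u^-_\gamma$ replaced by $\eta^-_j$ (for $C^1_j$ and $\tilde C^1_j$) or $w^{l-}_j$ (for $C^l_j$), using (\ref{82})--(\ref{84}) and (\ref{78}); the regularization $+\frac{1}{12\tilde r_j}\eta^-_j$ needed at a genuine simple pole produces the correction term in (\ref{144}), while for $l=k_j$ the coinciding residues of $w^{k_j-}_j$ against $S_v/(6v)$ at the higher-order poles are unchanged from (\ref{129}) and yield the extra term in (\ref{147}).

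I expect the cancellation of $\hat S_B$ in $S^+_B/v$, and hence the clean coefficient $5/12$ together with the disappearance of the $\hat B_{reg}$-residue, to be the one place where care is essential: a sign or factor slip in (\ref{34}) or in the expansion of $\hat B(x,\mu(x))$ would spuriously reinstate a Bergman-connection residue and break the expected parallel with (\ref{125}). Everything else is a mechanical transcription of the proof of Theorem 8 with the single replacement $B^-_{reg}\mapsto B^+_{reg}$.
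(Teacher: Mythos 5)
Your proposal is correct and takes essentially the same route as the paper: restrict the equations (\ref{65})--(\ref{67}) for $\log\tau^+$ to $\mathcal{M}_{SL(2)}[\mathbf{k}]$ exactly as in Theorem 8, with the single new ingredient being the branch-point expansion $B^+_{reg}/v=\big[\tfrac{5}{12\hat{\xi}^4}+O(1)\big]d\hat{\xi}$. Your explicit check that $\hat{S}_B$ cancels in $S^+_B$ near $x_i$ (so that no $\hat{B}_{reg}$-residue survives and the coefficient $\tfrac14\cdot\tfrac{5}{108}=\tfrac{5}{432}$ emerges), and that the simple-pole contributions coincide with those for $\tau^-$, fills in precisely the computation the paper's proof asserts without detail.
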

\begin{remark}
It follows from (\ref{58}) that
$\log \hat{\tau}=\log \tau^+ + \log \tau^- .$ Then the variations of the tau function $\log \hat{\tau}$ could be obtained as the sum of the corresponding variational formulas from Theorems 8 and 9.
\end{remark}

\subsection{Higher order variations}
Here we briefly discuss higher order variations associated with the Prym matrix $\Pi_{\alpha \beta}$ in holomorphic case $(m=0).$ Denote the corresponding space by $\mathcal{M}_{SL(2)}.$ It is known that the period matrix of the spectral $GL(n)$ cover is given by second derivatives of a single function $\mathcal{F}$ called the prepotential. The same result holds for $\Pi_{\alpha \beta}$ in the $SL(2)$ case:

\begin{equation}\mathcal{F}=\frac{1}{2}\sum^{g^-}_{\gamma=1}A_{\gamma} B_{\gamma},\end{equation}
such that

\begin{equation} \label{prym}
    \Pi_{\alpha \beta}=\frac{\partial^2 \mathcal{F}}{\partial A_{\alpha} \partial A_{\beta}}.\end{equation}
The proof is in parallel with that outlined in \cite{bertola2019spaces}.

\begin{remark}
Prepotential $\mathcal{F}$ is  known to be a generating function between homological and canonical coordinates on symplectic space $\mathcal{Q}_g$ of holomorphic quadratic differentials \cite{Bertola_2017}.
\end{remark}

Then the cubic (\ref{87}) is given by a third derivate with respect to $A_\alpha$-periods on $\mathcal{M}_{SL(2)}:$

\begin{equation}\frac{\partial^3 \mathcal{F}}{\partial A_{\alpha}\partial A_{\beta}\partial A_{\gamma}},\end{equation}
or, equivalently, as a third Lie-derivative along the fields (\ref{106}) on $\mathcal{Q}_{g}:$

\begin{equation}V_{A_\alpha} V_{A_\beta} V_{A_\gamma} (\mathcal{F}).\end{equation}

We conclude with the formula for the second variation of the Prym matrix $\frac{\partial^2 \Pi_{\alpha \beta}}{\partial A_\delta \partial A_\gamma}$ . Due to (\ref{prym}) the resulting expression must be symmetric with respect to all 4 indices. It could be computed by differentiation of the formula (\ref{87}) and using variations of Prym differentials (\ref{122}). The computation could be performed following the Proposition 5.1 in \cite{bertola2019spaces} with a small alteration.

\begin{propos}
Second derivative of the Prym matrix $\Pi_{\alpha \beta}$ on $\mathcal{M}_{SL(2)}$ is given by the following expression:

\begin{equation}\label{148}\begin{aligned}&\frac{1}{2\pi i} \frac{\partial^2 \Pi_{\alpha \beta}}{\partial A_\delta \partial A_\gamma}=\frac{1}{16}\sum_{x_a \neq x_b} \Bigg \{B^-(x_a,x_b) \frac{u^-_\delta(x_a) u^-_\gamma(x_a) u^-_\alpha(x_b) u^-_\beta (x_b)+\textit{cycl of } (\alpha,\beta,\gamma)}{y'(x_a)y'(x_b)} \Bigg\}+\\
&+\frac{1}{16}\sum_{x_a}\Bigg \{\Bigg(\frac{6 \hat{B}_{reg}}{y'^2}-\frac{y'''}{y'^3} \Bigg){u}^-_\alpha{u}^-_\beta {u}^-_\gamma{u}^-_\delta(x_a) +\frac{1}{y'^2}(({u}^-_\alpha)''{u}^-_\beta {u}^-_\gamma{u}^-_\delta(x_a)+\textit{cycl of } (\alpha,\beta,\gamma,\delta)) \Bigg \}, \end{aligned}\end{equation}
where $y=\frac{v}{d\xi}.$ Values and derivatives at brach points $x_a$ are computed in a local coordinate $\hat{\xi}$. This formula does not depend on the choice of local coordinates $\xi, \hat{\xi}$ on $\mathcal{C}$ and $\hat{\mathcal{C}}$, provided $\hat{\xi}^2=\xi.$
\end{propos}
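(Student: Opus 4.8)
The plan is to obtain the second variation by differentiating the cubic residue formula (\ref{87}) with respect to $A_\delta$, treating its right-hand side as a function on $\mathcal{M}_{SL(2)}$ and applying the Leibniz rule to every source of moduli dependence. Writing $y=v/d\xi$, the summand $u^-_\alpha u^-_\beta u^-_\gamma/\big(d\xi\, d(v/d\xi)\big)$ depends on $A_\delta$ through three mechanisms: (i) the three Prym differentials in the numerator, whose variations are given by (\ref{122}); (ii) the differential $v$ in the denominator through $y$ and its derivatives, with $\partial v/\partial A_\delta=u^-_\delta$ by (\ref{82}); and (iii) the position of the branch point $x_a$ at which the residue is taken, which enters because all variations are performed at fixed base coordinate $\xi$, so that $\partial/\partial A_\delta$ of an object with a pole at the moving location $\xi_a$ raises the pole order and thereby records the motion $\partial\xi_a/\partial A_\delta$ (itself tied to $u^-_\delta$ through implicit differentiation of $v(\xi_a)=0$). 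The whole computation runs parallel to Proposition 5.1 of \cite{bertola2019spaces}; the essential difference is that here $v$ has a \emph{double} zero at each branch point, so that in the coordinate $\hat\xi$ with $\hat\xi^2=\xi$ the function $y$ has only odd Taylor part, whence $y'(x_a)\neq0$, $y''(x_a)=0$ and $y'''(x_a)$ is the first nontrivial higher coefficient. This parity is what selects the surviving orders and produces the particular numerical constants below.

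The first sum in (\ref{148}) comes from the off-diagonal part of mechanism (i). When $\partial/\partial A_\delta$ acts on, say, $u^-_\gamma$ via (\ref{122}), it produces an inner residue at a branch point $x_b$ of an expression containing $B^-(x,t)$, with $x$ sitting near the outer branch point $x_a$. For $x_a\neq x_b$ the Prym bidifferential $B^-(x_a,x_b)$ is regular, and extracting both residues in the distinguished coordinates $\hat\xi$ at $x_a$ and $x_b$ collapses each remaining $u^-$ to its leading coefficient, giving exactly $B^-(x_a,x_b)\,u^-_\delta u^-_\gamma(x_a)\,u^-_\alpha u^-_\beta(x_b)/\big(y'(x_a)y'(x_b)\big)$. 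Since the derivative may instead strike $u^-_\alpha$ or $u^-_\beta$, one obtains the symmetrization ``cycl of $(\alpha,\beta,\gamma)$'', while the restriction $x_a\neq x_b$ in the sum symmetrizes $x_a\leftrightarrow x_b$. The overall rational prefactor (the $\tfrac{1}{16}$ after the $\tfrac{1}{2\pi i}$ normalization) is fixed unambiguously by combining the constants of (\ref{87}) and (\ref{122}) with the factors of two introduced when expressing residues at double zeros of $v$ through $y'=v_0/2$.

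The diagonal sum ($x_a=x_b$) is the main obstacle. Here the inner residue collides with the outer one: as $x\to x_a$ the bidifferential $B^-(x,x_a)$ develops its diagonal double pole, and this singular contribution must be combined with those of mechanisms (ii) and (iii), which also concentrate at $x_a$. The correct procedure is to expand every ingredient in the local parameter $\hat\xi$, using the near-diagonal expansion (\ref{33}) to replace $B^-$ by its regular part $\hat B_{reg}$ plus the explicit Schwarzian pieces, then differentiating $v$ in the denominator by (\ref{82}) and accounting for the raised pole order caused by the motion of $\xi_a$, and finally collecting the finite part of the resulting Laurent series. The three surviving terms are precisely $6\hat B_{reg}/y'^2$, $-y'''/y'^3$ and the symmetrized $(u^-_\alpha)''/y'^2$; the double-zero structure of $v$ is exactly what promotes the relevant coefficients to the third-derivative level and fixes the constant $6$, in contrast to the simple-zero $GL(n)$ computation of \cite{bertola2019spaces}. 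Verifying that the reference-point ($p_0$) and boundary contributions cancel, as in the proof of (\ref{87}), is the delicate bookkeeping step of this part.

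Two consistency checks then close the argument. First, by (\ref{prym}) the left-hand side equals $\partial^4\mathcal{F}/\big(\partial A_\alpha\partial A_\beta\partial A_\gamma\partial A_\delta\big)$ and must be totally symmetric in $(\alpha,\beta,\gamma,\delta)$; this both explains the cyclic symmetrizations displayed in (\ref{148})---in particular the four-fold cycle on the $(u^-)''$ term---and provides a strong check on all constants. Second, I would verify invariance under a change of local coordinates $\xi,\hat\xi$ subject to $\hat\xi^2=\xi$: under such a change $\hat B_{reg}$ and the combination $y'''/y'^3$ each shift by the Schwarzian of the transition map, while the terms $(u^-)''/y'^2$ transform so as to cancel these anomalies, so that the bracketed expression is genuinely tensorial. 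This is the analogue of the tensoriality verification in \cite{bertola2019spaces}, adapted to the Prym setting and to the double-zero local model.
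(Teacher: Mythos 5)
Your proposal follows the same route the paper takes (and in fact spells out more than the paper does, which only notes that the formula is obtained by differentiating (\ref{87}) using the variations (\ref{122}) and following Proposition 5.1 of \cite{bertola2019spaces} with a small alteration): differentiate the cubic residue formula, split the resulting double residues into off-diagonal and coinciding-point contributions, and handle the diagonal via the local expansion at the double zero of $v$ together with the motion of the branch points. The bookkeeping you indicate is consistent with the stated constants (e.g.\ the $\tfrac{1}{16}$ in the off-diagonal sum does follow from combining the $-\pi i$ of (\ref{87}), the $-\tfrac12$ of (\ref{122}) and the factors of $2$ from $d\xi=2\hat{\xi}\,d\hat{\xi}$ and $y'=v_0/2$), so this is essentially the paper's argument.
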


\textbf{Open problems.}
An immediate generalization of the above results will be the derivation of variational formulas on the spaces of meromorphic N-fold covers defined by $v^N=Q,$ where $Q$ is a meromorphic N-differential. Such covers possess a natural $\Z_N-$ symmetry which also induces splitting of the homology group by eigenspaces of the discrete group action. The formulas for the variable base were derived in \cite{korotkin2017tau}, \cite{korotkin2020bergman}. To treat the fixed base case an appropriate version of the chain rule and Riemann Bilinear Identity should be applied. Notice that this case is highly non-generic, since all ramification points are of order $N.$

\nocite{*}
\bibliographystyle{plain}

\end{document}